\documentclass[sigconf,nonacm]{acmart} 

\usepackage[dvipsnames]{xcolor}

\newcommand{\RR}{\mathbb{R}}
\usepackage{tabularx}

\newcommand{\mcE}{\mathcal{E}}

\newcommand{\mcP}{\mathcal{P}}

\newcommand{\NN}{\mathbb{N}}

\newcommand{\rk}{\mathrm{rk}}

\newcommand{\mbfv}{\mathbf{v}}
\newcommand{\mbfe}{\mathbf{e}}
\newcommand{\mbfa}{\mathbf{a}}
\newcommand{\mbfx}{\mathbf{x}}
\newcommand{\mbfp}{\mathbf{p}}
\newcommand{\mbfs}{\mathbf{s}}


\binoppenalty=\maxdimen
\relpenalty=\maxdimen

\usepackage[capitalize]{cleveref}

\newtheorem{conjecture}{Conjecture}
\newtheorem{ex}{Example}
\newtheorem{defn}{Definition}
\newtheorem{prop}{Proposition}

\usepackage{tikz}
\usepackage{tikz-3dplot}
\usepackage{tkz-euclide}
\usepackage{amsmath}
\usepackage{pgfplots}
\usepackage{pgfmath}
\pgfplotsset{compat=1.18}
\usepackage{thm-restate}
\usepackage{balance} 
\usepackage{xspace}

\title[Real Preferences]{Real Preferences under Arbitrary Norms}

\author{Joshua Zeitlin}
\affiliation{
	\institution{University of Cambridge}
	\country{United Kingdom}}
\email{jdz28@cam.ac.uk}

\author{Corinna Coupette}
\affiliation{
	\institution{Aalto University}
	\country{Finland}}
\email{corinna.coupette@aalto.fi}

\begin{abstract}
Whether the goal is to reach decisions among multiple agents, 
ensure that AI systems are aligned with human preferences, 
or design better recommender systems,
the problem of translating between (ordinal) rankings and (numerical) utilities arises naturally in many contexts. 
This task is commonly approached by computing \emph{embeddings},  
which represent both the agents doing the ranking (\emph{voters}) and the items to be ranked (\emph{alternatives}) in a shared metric space. 
Here, ordinal preferences are translated into relationships between pairwise distances. 
Prior work has established that any collection of rankings with $n$ voters and $m$ alternatives (\emph{preference profile}) can be embedded into $d$-dimensional Euclidean space for $d \geq \min\{n,m-1\}$ under the Euclidean norm and the Manhattan norm. 
We show that this holds for \emph{all $p$-norms} and establish that any \emph{pair} of rankings can be embedded into $\mathbb{R}^2$ under \emph{arbitrary norms}, 
significantly expanding the reach of spatial preference models. 

\end{abstract}

\keywords{Spatial Preferences, Social Choice Theory, Voting}

\begin{document}

\pagestyle{fancy}
\fancyhead{}

\maketitle 


\section{Introduction}\label{sec:introduction}


From selecting winners in real-world elections 
to reaching decisions in multiagent systems, 
designing powerful recommendation engines, 
and improving language models with human feedback: 
The problem of evaluating several alternatives based on information about preferences collected from multiple parties lies at the heart of diverse practical applications. 
In many settings, 
the preference information available is \emph{ordinal}---%
e.g., 
because rankings and pairwise comparisons can be intuitively understood by humans and communicated efficiently between agents.
At the same time, 
\emph{metric} representations of preferences tend to allow more expressive evaluations and more efficient computations. 
As a result, identifying ``good'' mappings from ordinal to metric preferences constitutes a key practical challenge. 

In economics and political science, this challenge is commonly approached via the widely popular \emph{theory of spatial preferences}, 
which represents agents (e.g., voters) and alternatives (e.g., candidates or policy outcomes) as points in Euclidean space \citep{stokes1963spatial,enelow1984spatial}. 
Here, the underlying norm is often assumed to be Euclidean as well; 
individual utilities and collective social costs are treated as functions of Euclidean distance \citep{kalandrakis2010rationalizable,eguia2013spatial}; 
and the dimensionality $d$ of the target space is typically low (i.e., $1 \leq d \leq 3$). 
The spatial model of preferences rests on solid theoretical foundations \citep{enelow1984spatial,eguia2011foundations,chambers2020spherical},  
and latent-space models could be viewed as implicitly extending these foundations beyond the political realm. 
However, while much \emph{methodological} effort has focused on estimating voters' ideal points from real-world preference data \citep{ladha1991spatial,clinton2004statistical,kim2018estimating,carroll2013structure,luque2025operationalizing},
a growing body of work indicates that low-dimensional Euclidean space under the Euclidean norm provides a poor fit---%
both \emph{conceptually} \citep{milyo2000logical,eguia2009utility,humphreys2010spatial} 
and \emph{empirically} \citep{pardos2010systemic,ye2011evaluating,henry2013euclidean,stoetzer2015multidimensional,yu2021spatial}---%
even in political applications. 

\textbf{Motivation and Related Work.}\quad
Given the importance of translating between ordinal and metric representations of preference information 
in recommender systems, multiagent decision-making, 
and social-choice applications, 
it is natural to ask under which conditions a given \emph{preference profile} 
(i.e., a list of rankings over alternatives) permits what we call a \emph{rank-preserving embedding}: 
a mapping of voters and alternatives into a normed real vector space that encodes the available ordinal preference information in the relationships between the pairwise distances. 

Toward answering this question, 
researchers in computational social choice have
identified \emph{necessary conditions} in terms of forbidden substructures \citep{peters2017recognising,thorburn2023error}, 
and they have analyzed the computational complexity of recognizing when a preference profile admits a rank-preserving embedding into Euclidean space of a specified dimension 
\citep{knoblauch2010recognizing,peters2017recognising,escoffier2023algorithmic}.
They have also formulated \emph{sufficient conditions} regarding the dimensionality of the embedding space for the special cases of the Euclidean norm~\citep{bogomolnaia2007euclidean}
and the Manhattan norm \citep{chen2022manhattan}. 
However, to our knowledge, nothing is known about the rank embeddability of preference profiles under \emph{arbitrary norms}. 

The dearth of rank-embeddability results under norms that are neither Euclidean nor Manhattan 
stands in stark contrast to the growing interest in concepts requiring the joint consideration of \emph{all metric spaces}, such as \emph{metric distortion} 
\citep[see][]{anshelevich2018approximating,anagnostides2022metric,caragiannis2022metric,ebadian2024metric}. 
Moreover, non-Euclidean norms have proved to be relevant across a variety of contexts---%
from \emph{voting} \citep{peters1993generalized,humphreys2010spatial,eckert2010equity,chen2022manhattan,shin2025l1} and \emph{facility location} \citep{larson1983facility,feigenbaum2017approximately,lee2025facility} 
to \emph{algorithmic data analysis}
\citep{cohen2019inapproximability,cohen2022johnson,cohen2023breaching,abbasi2023parameterized} and \emph{recommender systems} \citep{rudin2009pnorm}. 
Overall, a better understanding of the restrictions imposed on rank embeddability by more general norms appears desirable. 
Our work makes significant progress toward this goal.

\textbf{Formal Setting and Intuition.}\quad
We operate in a setting with $m$ alternatives $A = \{a_1,\dots,a_m\}$ (also known as candidates) and $n$ voters $V = \{1,\dots,n\}$. 
Given a set of alternatives $A$ and a voter $i$, 
a \emph{preference} of $i$ is a weak order $\succsim_i$ on $A$, 
where $x \succsim_i y$ indicates that $i$ weakly prefers alternative $x$ to alternative $y$. 
A preference is \emph{strict} if $x \sim_i y$ implies $x = y$, 
and it is \emph{complete} if we have $x \succsim_i y$ or $y \succsim_i x$ for all $x \neq y$.
We will generally assume that preferences are strict and complete. 

Given alternatives $A$ and voters $V$ with preferences over $A$, 
the list $(\succsim_i)^n_{i=1}$ is said to be a \emph{preference profile} $\mcP_{A,V}$,
where we drop the subscript of $\mcP$ when $A$ and $V$ are clear from context. 
We are interested in characterizing when a preference profile can be faithfully represented in a given metric space, 
focusing on real metric spaces equipped with a \emph{norm}.
\begin{defn}[Norm]\label{def:norm}
	A \emph{norm} on $\RR^d$ is a function $\|\cdot\|\colon\RR^d\to \RR^d$ such that 
	$\|\mbfx\|\geq 0$ (non-negativity), 
	$\|\mbfx+\mathbf{y}\|\leq \|\mbfx\|+\|\mathbf{y}\|$ (subadditivity), 
	$\|\mbfx\|=0$ if and only if $x=0$ (positive definiteness), 
	and $\|c\mbfx\|=|c|\|\mbfx\|$ for any $c\in \RR$ (absolute homogeneity)\;.
\end{defn}
Some of our results focus specifically on \emph{$p$-norms}.
\begin{defn}[$p$-Norm]\label{def:pnorm}
	Given a real number $p\geq 1$, the $p$-norm of a vector $\mbfx\in\RR^d$ is given by 
	$\|\mbfx\|_p \coloneq \left(\sum_{i=1}^d |x_i|^p\right)^{\frac{1}{p}}$.
\end{defn}

For real metric spaces with \emph{arbitrary} norms, 
we introduce a flexible notion of embeddability.
\begin{defn}[Rank-preserving embedding, rank embeddability]\label{def:embeddability}
	Given a preference profile $\mcP$, a dimension $d$, and a normed real vector space $(\RR^d,\|\cdot\|)$, 
	an assignment of coordinates $\mbfa_j\in \RR^d$ to alternatives $a_j\in A$ and coordinates $\mbfv_i\in \RR^d$ to voters $i\in V$ constitutes 
	a \emph{rank-preserving embedding} of $\mcP$ into $(\RR^d,\|\cdot\|)$ 
    if 
    \begin{align*}
    	a_j\succsim_ia_k  \iff \|\mbfv_i-\mbfa_j\|\leq\|\mbfv_i-\mbfa_k\|\;.
    \end{align*}
    If~there exists a rank-preserving embedding of $\mcP$ into $(\RR^d,\|\cdot\|)$, $\mcP$ is said to \emph{rank-embed} into $(\RR^d,\|\cdot\|)$.
\end{defn}

This definition generalizes the notions of \emph{$d$-Euclidean} and \emph{$d$-Manhattan} embeddings introduced by \citet{bogomolnaia2007euclidean} and \citet{chen2022manhattan}, respectively, to arbitrary norms, 
and it can be further generalized to arbitrary metric spaces.
Intuitively, \cref{def:embeddability} requires that a coordinate assignment translates higher ranks in preference orderings to smaller distances in the embedding space. 
Since any two voters with identical preferences can be assigned identical coordinates, 
this implies that the rank embeddability of $\mcP$ into a given metric space only depends on the \emph{distinct preferences} (i.e., \emph{equivalence classes}) of voters. 
Therefore, in the following, we assume w.l.o.g.\ that the preference profiles we study are \emph{irreducible}, 
that is, no two voters $i$ and $j$ have identical preferences. 
Consequently, $n$ can be interpreted as the number of distinct \emph{voter types}. 

\textbf{Main Contributions and Techniques.}\quad 
In their pioneering work, 
\citet{bogomolnaia2007euclidean} proved that any preference profile with $n$ voters and $m$ alternatives rank-embeds into $(\RR^d,\|\cdot\|_2)$ if $d\geq \min\{n,m-1\}$.
\citet{chen2022manhattan} extended this result to $(\RR^d,\|\cdot\|_1)$ 
using a very different approach, 
and they also treat the infinity norm using yet another construction.  
Leveraging our norm-independent notion of \emph{rank embeddability} (\cref{def:embeddability}), 
in \cref{sec:thm1}, 
we generalize these results to \emph{arbitrary $p$-norms} for $p>1$, 
streamlining the handling of the cases $p = \infty$ and $\infty > p > 1$ as a side effect. 
Separating the cases $d \geq n$ and $d\geq m-1$, 
our proofs are constructive and geometric, 
isolating only those features of the embedding classes introduced by \citet{bogomolnaia2007euclidean} that are strictly necessary to ensure rank preservation. 
Analyzing the asymptotic behavior of our main construction as $p \searrow 1$, 
we further elucidate why the $1$-norm requires a fundamentally different construction. 

Following our investigation of $p$-norms, 
we explore how our results extend to arbitrary norms in low dimensions.  
In \cref{sec:thm2}, we demonstrate that any preference profile with two voters rank-embeds into $(\RR^2,\|\cdot\|)$ for \emph{any} norm~$\|\cdot\|$ (including norms that are not $p$-norms, such as $2\|\cdot\|_1+5\|\cdot\|_2$). 
Our proof relies on a geometric construction that, 
while not fully explicit, 
may still be of independent interest in the context of facility-location problems such as those studied by \citet{larson1983facility} or \citet{chan2021mechanism}. 
Unlike \citet{bogomolnaia2007euclidean}, we do not need to resort to arithmetic calculations,  
which allows us to provide a much cleaner framework for reasoning about rank embeddability. 

Our results demonstrate the flexibility with which the theory of spatial preferences can translate between ordinal preferences and numerical utilities,  
and they provide a theoretical basis for systematically exploring the potential of non-standard norms in practical applications. 
As elaborated in \cref{sec:discussion}, 
our results further highlight where additional research is needed to understand the prerequisites and implications of spatial-preference theory.
They also constitute a strong foundation for tackling the rank-embeddability question under arbitrary norms in higher dimensions, which we formalize in \cref{sec:discussion} as \cref{conj}.

\section{Rank-Preserving Embeddings under \texorpdfstring{$p$}{TEXT}-Norms}\label{sec:thm1}

In this section, we generalize the rank-embeddability results originally proved for the 1-norm and the 2-norm to arbitrary $p$-norms, 
considerably extending the set of scenarios to which the theory of spatial preferences can be applied. 

\begin{restatable*}[Rank embeddability under {$p$-norms}]{thm}{lpnorms}\label{thm1}
	Given $m$ alternatives $A$ and $n$ voters $V$ with preferences over these alternatives, 
	a preference profile~$\mcP_{A,V}$ rank-embeds into $(\RR^d,\|\cdot\|_p)$, for all $1\leq p\leq \infty$, if $d\geq \min\{n,m-1\}$.
\end{restatable*}

Our proof proceeds in two steps, 
establishing rank embeddability into $(\RR^d, \|\cdot\|_p)$ first for $d \geq n$ (\Cref{prop:prop1}) and then for $d \geq m-1$ (\Cref{prop:prop2}).

\subsection{Required Dimensionality Depending on the Number of Voters}

To establish rank embeddability as a function of $n$, 
we leverage a family of embeddings introduced by \citet{bogomolnaia2007euclidean}. 
We refer to embeddings from this family as \emph{alternative-rank embeddings}---%
because the coordinates of the alternatives directly reflect how they are ranked by the voters. 
These embeddings are defined as follows.

\begin{defn}[Alternative-Rank Embedding {[}AR Embedding{]}]\label{def:alternative-rank-embedding}
	Given a preference profile $\mcP$ with $m$ alternatives and $n$ voters as well as a constant $c>0$, 
	an \emph{alternative-rank embedding} $\mcE_c$
	assigns coordinates $\mbfv_i=c\cdot\mbfe_i$ to each voter and coordinates $\mbfa_j=(-\rk_{i}j\mid i\in [n])$ to each alternative,
	 where $\mbfe_i$ is the $i^{th}$ standard basis vector
	 and $\rk_{i}j$ is the rank of alternative $j$ in the preference ordering of voter~$i$. 
\end{defn}

\citet{bogomolnaia2007euclidean} show that for sufficiently large $c$, 
any AR embedding $\mcE_c$ into $\RR^n$ is rank-preserving for any preference profile $\mcP$ with $n$ voters under the Euclidean norm.
The following example provides some intuition for this finding.

\begin{ex}\label{ex1}
	Given a setting with five alternatives, $a_1$ to $a_5$, 
	and~two voters, $v_1$ and $v_2$,
	consider the preference profile $\mcP = (\succsim_1,\succsim_2)$, with
	\begin{align*}
		\succsim_1 :~a_1\succ a_2\succ a_3\succ a_4\succ a_5&~\text{~and~}~
		\succsim_2 :~ a_2\succ a_4\succ a_5\succ a_1\succ a_3\;.
	\end{align*}
	\Cref{fig:ar-embedding} depicts a rank-preserving AR embedding of~~$\mcP$ using $c = 10$.
\end{ex}

\begin{figure}[t]
	\centering
\begin{tikzpicture}[scale=0.475]
	\newcommand{\figonecolorone}{MidnightBlue}
	\newcommand{\figonecolortwo}{OliveGreen}
	\newcommand{\figonecolorthree}{black}
	\draw[step=2cm,gray!50,very thin] (-6,-6) grid (11,11);
	
	\draw[-,thick,color=\figonecolorone] (10,0) -- (-1,-4);
	\draw[-,thick,color=\figonecolorone!80] (10,0) -- (-2,-1);
	\draw[-,thick,color=\figonecolorone!60] (10,0) -- (-3,-5);
	\draw[-,thick,color=\figonecolorone!40] (10,0) -- (-4,-2);
	\draw[-,thick,color=\figonecolorone!20] (10,0) -- (-5,-3);
	
	\draw[-,thick,color=\figonecolortwo!40] (0,10) -- (-1,-4);
	\draw[-,thick,color=\figonecolortwo] (0,10) -- (-2,-1);
	\draw[-,thick,color=\figonecolortwo!20] (0,10) -- (-3,-5);
	\draw[-,thick,color=\figonecolortwo!80] (0,10) -- (-4,-2);
	\draw[-,thick,color=\figonecolortwo!60] (0,10) -- (-5,-3);
	
	\draw[-stealth,thick] (-6, 0) -- (11, 0) node[right] {$x$};
	\draw[-stealth,thick] (0, -6) -- (0, 11) node[above] {$y$};
	
	\filldraw[\figonecolorone] (10,0) circle (3pt) node[below right] {$\mbfv_1$};
	\filldraw[\figonecolortwo] (0,10) circle (3pt) node[above left] {$\mbfv_2$};
	
	\filldraw[\figonecolorthree] (-1,-4) circle (3pt) node[above left] {$\mbfa_1$};
	\filldraw[\figonecolorthree] (-2,-1) circle (3pt) node[above right] {$\mbfa_2$};
	\filldraw[\figonecolorthree] (-3,-5) circle (3pt) node[below left] {$\mbfa_3$};
	\filldraw[\figonecolorthree] (-4,-2) circle (3pt) node[above right] {$\mbfa_4$};
	\filldraw[\figonecolorthree] (-5,-3) circle (3pt) node[below left] {$\mbfa_5$};
	
\end{tikzpicture}
	\caption{%
		Rank-preserving AR embedding of the preference profile presented in \cref{ex1}, using $c = 10$. 
		The line segments connecting the voters to the alternatives are colored from dark to light in decreasing order of their length, 
		highlighting that the preferences are in fact ordered by distances (which we can also verify arithmetically).
	}\label{fig:ar-embedding}
	\Description{TODO} 
\end{figure}
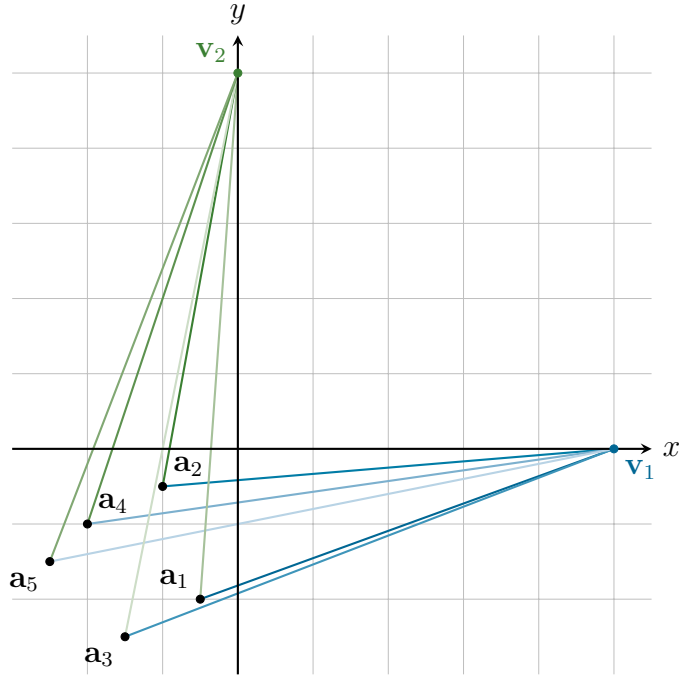

In the following, we first use AR embeddings to establish that for $p>1$, 
any preference profile with $n$ voters rank-embeds into $(\RR^n,\|\cdot\|_p)$. 
Afterwards, we discuss why this construction does not work for $p = 1$, 
sketching the alternative approach showcased by \citet{chen2022manhattan} in \Cref{def:max-rank-embedding}.

\begin{prop}[Dimensionality depending on $n$]\label{prop:prop1}
	\hspace*{0.1em}\\
	For $1< p\leq \infty$, any preference profile~~$\mcP$ with $n$ voters can be rank-embedded into $(\RR^n,\|\cdot\|_p)$, 
	where $\|\cdot\|_p$ denotes the $p$-norm.
\end{prop}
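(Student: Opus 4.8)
The plan is to show that the alternative-rank embedding $\mcE_c$ of \Cref{def:alternative-rank-embedding} is already rank-preserving for every $p\in(1,\infty]$ once the scaling constant $c$ is chosen large enough, by separating a "signal" term in the relevant distances that grows with $c$ from a bounded "noise" term that does not. First I would compute the distances under $\mcE_c$: for a voter $i$ and an alternative $a_j$, the vector $\mbfv_i-\mbfa_j$ has $i$-th coordinate $c+\rk_i j$ and $\ell$-th coordinate $\rk_\ell j$ for every $\ell\neq i$, all of which are positive, so for $p<\infty$
\[
\|\mbfv_i-\mbfa_j\|_p^{\,p}=(c+\rk_i j)^p+\sum_{\ell\neq i}(\rk_\ell j)^p .
\]
Fixing $i$ and comparing two alternatives $a_j,a_k$, the two sums $\sum_{\ell\neq i}(\rk_\ell j)^p$ and $\sum_{\ell\neq i}(\rk_\ell k)^p$ each lie in $[0,(n-1)m^p]$, so their difference is bounded in absolute value by $(n-1)m^p$, a quantity depending only on $m,n,p$ and not on $c$. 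Meanwhile $r\mapsto(c+r)^p$ is strictly increasing on the integers $r\in[m]$, and by convexity of $t\mapsto t^p$ its consecutive gaps satisfy $(c+r+1)^p-(c+r)^p\ge (c+1)^p-c^p$ for all $r\ge 0$.

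The key step is the signal-versus-noise comparison. For $p>1$ we have $(c+1)^p-c^p\to\infty$ as $c\to\infty$ (for instance $(c+1)^p-c^p\ge(c+1)^{p-1}$, or by the mean value theorem $(c+1)^p-c^p=p\,\xi^{p-1}$ for some $\xi>c$), so I may fix $c>0$ with $(c+1)^p-c^p>(n-1)m^p$. Then whenever $\rk_i j<\rk_i k$,
\[
\|\mbfv_i-\mbfa_k\|_p^{\,p}-\|\mbfv_i-\mbfa_j\|_p^{\,p}\ \ge\ \big[(c+1)^p-c^p\big]-(n-1)m^p\ >\ 0 .
\]
Since for a fixed voter the map $a_j\mapsto\rk_i j$ is a bijection onto $[m]$, any two distinct alternatives are strictly ordered by rank, so reading this estimate in both directions gives $a_j\succsim_i a_k\iff \rk_i j\le\rk_i k\iff\|\mbfv_i-\mbfa_j\|_p\le\|\mbfv_i-\mbfa_k\|_p$ for all $i,j,k$, which is exactly rank-preservation. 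The case $p=\infty$ is simpler still: for $c>m$ one has $\|\mbfv_i-\mbfa_j\|_\infty=\max\{c+\rk_i j,\ \max_{\ell\neq i}\rk_\ell j\}=c+\rk_i j$, and the equivalence is immediate.

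I expect the only genuine subtlety to be ensuring that a single constant $c$ works uniformly — over all voters $i$ and all pairs of alternatives simultaneously — which is precisely why it matters that the noise bound $(n-1)m^p$ is independent of $i,j,k$ and that every signal gap is bounded below by the one number $(c+1)^p-c^p$. This is also exactly where the construction breaks at $p=1$: there $(c+1)^p-c^p=1$ regardless of $c$, so the signal can never be made to dominate the noise, which is the obstruction motivating the different (max-rank) construction of \Cref{def:max-rank-embedding}.
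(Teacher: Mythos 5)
Your proposal is correct and follows essentially the same route as the paper: an alternative-rank embedding with $c$ chosen large enough that the gap $(c+r')^p-(c+r)^p$ in the $i$-th coordinate dominates the $c$-independent contribution of the remaining coordinates, with the $p=\infty$ case handled by the same direct computation. The only differences are cosmetic — you bound the ``noise'' by $(n-1)m^p$ and the ``signal'' from below by $(c+1)^p-c^p$, whereas the paper isolates the extremal configuration and arrives at the condition $(c+2)^p-(c+1)^p>(n-1)(n^p-1)$ — and your use of $m$ (rather than $n$) as the maximal rank in the worst-case coordinates is in fact the more careful choice.
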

\begin{proof} 
	We show this by establishing that for any $p$ and a given preference profile~$\mcP$, 
	we can always choose $c>0$ such that the AR embedding $\mcE_c$ of~$\mcP$ is rank-preserving under the $p$-norm.
	We distinguish two cases, $1 < p < \infty$ and $p = \infty$.
	
	\emph{Case~1: $1 < p < \infty$.}\\
	For $1 < p< \infty$, 
	let $c>0$, 
	and take the corresponding AR embedding $\mcE_c$ of $\mcP$. 
	Now, $\mcE_c$ is rank-preserving for $\mcP$ if for all alternatives $j$, $l$ and all voters $i$, we have

	\begin{align*}
		a_j\succ_i a_l \Longrightarrow \|\mbfv_i-\mbfa_j\|_p< \|\mbfv_i-\mbfa_l\|_p\;,
	\end{align*}
	or equivalently, 
	\begin{align*}
		a_j\succ_i a_l \Longrightarrow \|\mbfv_i-\mbfa_j\|_p^p< \|\mbfv_i-\mbfa_l\|_p^p\;.
	\end{align*}
	For any $a_j\succ_i a_l$, we have 
	\begin{align*}
		\|\mbfv_i-\mbfa_j\|_p^p 
		&= \sum_{k=1}^{n}|a_j^{(k)}-v_i^{(k)}|^p
		=\sum_{k\neq i}|a_j^{(k)}|^p+|-\rk_{i}j-c|^p\\
		&=|c+\rk_ij|^p+\|\mbfa_j\|_p^p-|\rk_ij|^p\;,
	\end{align*}
	as well as
	\begin{align*}
		\|\mbfv_i-\mbfa_l\|_p^p=|c+\rk_il|^p+\|\mbfa_l\|_p^p-|\rk_il|^p\;.
	\end{align*} 
	Thus, $\mcE_c$ is rank-preserving for $\mcP$ if and only if for all alternatives~$j,l$ and all voters~$i$, $\mbfa_j\succ_i\mbfa_l$ is equivalent to 
	\begin{align*}
		 |c+\rk_ij|^p+\|\mbfa_j\|_p^p-|\rk_ij|^p<|c+\rk_il|^p+\|\mbfa_l\|_p^p-|\rk_il|^p\;.
	\end{align*} 
	
	Analyzing this inequality independently of $c$, 
	for fixed $i$, 
	the right-hand side is \emph{minimized}, 
	and the left-hand side is \emph{maximized},~for
	\begin{align*}
		a_l^{(k)} = \begin{cases}
			-\rk_il&k = i\\
			-1&k\neq i\;,\\
		\end{cases}\quad\text{and}\quad
		a_j^{(k)}=&\begin{cases}
			-\rk_ij&k = i\\
			-m&k \neq i\;.
		\end{cases}
	\end{align*}
	In this case, 
	\begin{align*}
		\|\mbfa_l\|_p^p=&(n-1)+(\rk_il)^p\;,~\text{and}\\
		\|\mbfa_j\|_p^p=&(n-1)m^p+(\rk_ij)^p\;.
	\end{align*}
	
	Hence, all we need is for $c$ to satisfy
	\begin{align*}
		&(c+\mathrm{rk}_ij)^p+(n-1)m^p+(\mathrm{rk}_ij)^p-(\mathrm{rk}_ij)^p\\
		<&(c+\mathrm{rk}_il)^p+(n-1)+(\mathrm{rk}_il)^p-(\mathrm{rk}_il)^p\;,
	\end{align*}
	 which reduces to
	 \begin{align*}
	 	(n-1)m^p-(n-1)<(c+\mathrm{rk}_il)^p-(c+\mathrm{rk}_ij)^p\;.
	 \end{align*}
	  The right-hand side is minimized when $\mathrm{rk}_il=2$ and $\mathrm{rk}_ij=1$. 
	  Thus, all we need is a $c$ such that 
	  \begin{align}
	  	(c+2)^p-(c+1)^p>(n-1)m^p-(n-1)\;.\label{eq:choosing-c}
	  \end{align}
	  Because the function $(x+2)^p-(x+1)^p$ is strictly increasing for $p>1$, such a $c$ must exist.
	%
	%
    %
    Hence, for $1 < p < \infty$, a rank-preserving AR embedding for $\mcP$ must exist.
	
	\emph{Case 2: $p = \infty$.}\\ 
	For $p=\infty$, set $c=m$ and choose the corresponding AR embedding $\mcE_m$. 
	Consider any $a_j\succ_ia_l$. 
	Then 
    \begin{align*}
        \|\mbfv_i-\mbfa_j\|_\infty=\max\{c+\rk_ij,\rk_kj \mid 1\leq k\neq i \leq n\}\;.
    \end{align*}
	As $c=m$ and $\rk_kj>0$, 
	we have $c+\rk_ij>m\geq \rk_kj$ for all $k\neq i$, 
	$\|\mbfv_i-\mbfa_j\|_\infty=c+\rk_ij$, and 
    \begin{align*}
        \|\mbfv_i-\mbfa_j\|_\infty=c+\rk_ij<c+\rk_il=\|\mbfv_i-\mbfa_l\|_\infty\\
        \iff \rk_ij<\rk_il \iff a_j\succ_i a_l\;.
    \end{align*}
	Joining both cases, 
	we obtain that for all $1< p\leq \infty$, 
	every preference profile $\mcP$ with $n$ voters rank-embeds into $(\RR^n,\|\cdot\|_p)$ using an AR embedding.
\end{proof}

We note three interesting implications of the condition on $c$ expressed in \Cref{eq:choosing-c}. 
First, if $(c'+2)^p-(c'+1)^p=(n-1)(n^p-1)$, 
then for all $c>c'$, an AR embedding $\mcE_{c}$ will be rank-preserving for \emph{any} preference profile $\mcP$ with $n$ voters (although the constant necessary for any specific profile may be smaller). 
Second, if $p < q$, the worst-case $c$ required for a rank-preserving AR embedding under the $p$-norm will also work under the $q$-norm. 
Lastly, as $p$ approaches $1$ from above, 
$f(x) = (x+2)^p-(x+1)^p$ grows to infinity at a slower rate, 
so the $c$ needed to guarantee \Cref{eq:choosing-c} gets larger. 
More precisely, the $c$ required as $p\searrow 1$ grows exponentially in $\frac{1}{p-1}$. 
This is made precise in \Cref{lemma1}.  

\begin{restatable}[Asymptotics of $c$ as a function of $\frac{1}{p-1}$]{prop}{casymptotics}\label{lemma1}
	\hspace*{0.1em}\\
	Let $n\in \NN$ and $p>1$, 
	and define 
	\begin{align*}
		c(x)\coloneq\inf\{c\mid (c+2)^{1+\frac{1}{x}}-(c+1)^{1+\frac{1}{x}}>(n-1)m^{1+\frac{1}{x}}-(n-1)\}\;.
	\end{align*}
	Then $c(\frac{1}{p-1})\in \Theta(\exp(\frac{1}{p-1}))$.
\end{restatable}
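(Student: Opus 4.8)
The plan is to analyze the defining inequality $(c+2)^{1+1/x} - (c+1)^{1+1/x} > (n-1)(n^{1+1/x}-1)$ by writing $q = 1 + 1/x$ and studying the growth of $g(c) := (c+2)^q - (c+1)^q$ as a function of $c$, then inverting. First I would observe that by the mean value theorem applied to $t \mapsto t^q$ on $[c+1, c+2]$, we have $g(c) = q\,\xi^{q-1}$ for some $\xi \in (c+1, c+2)$, so $g(c) = \Theta(q\,c^{q-1})$ with constants independent of $c$ (for $c \geq 1$, say). The right-hand side is $R(q) := (n-1)(n^q - 1)$, which for fixed $n$ is $\Theta(n^q)$ as $q \to 1^+$ from above (more precisely it tends to $(n-1)^2$, a constant, but I want to keep the $n^q$ form for the algebra). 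So up to absolute constants, $c(x)$ is the value of $c$ solving $q\,c^{q-1} \asymp n^q$, i.e. $c^{q-1} \asymp n^q / q$, i.e.
\begin{align*}
	(q-1)\log c \asymp q \log n - \log q\;.
\end{align*}
Since $q - 1 = 1/x$ and $q \to 1$, this gives $\log c \asymp x \log n$, hence $c(x) = \exp(\Theta(x)) = \Theta(\exp(1/(p-1)))$ after substituting back $x = 1/(p-1)$. The $\Theta$ here absorbs the constant $\log n$.

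To make this rigorous I would establish matching upper and lower bounds on $c(x)$ separately. For the \emph{upper bound}, I would exhibit an explicit $c_0 = C^x$ (for a suitable constant $C > n$ depending only on $n$) and verify $g(c_0) > R(q)$ directly: using $g(c) \geq q(c+1)^{q-1} \geq c^{q-1}$ and $R(q) = (n-1)(n^q-1) < n^{q+1}$, it suffices to check $c_0^{q-1} \geq n^{q+1}$, i.e. $(q-1)\log c_0 \geq (q+1)\log n$, i.e. $\frac{1}{x}\cdot x\log C \geq (q+1)\log n$; since $q + 1 \leq 3$, choosing $C = n^3$ works for all $x \geq 1$. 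This shows $c(x) \leq n^{3x}$. For the \emph{lower bound}, I would show that any $c$ with $c < c_1 := n^{\epsilon x}$ (for small constant $\epsilon > 0$) fails the inequality: using $g(c) \leq q(c+2)^{q-1} \leq 2q\,c^{q-1}$ (for $c \geq 2$) and $R(q) = (n-1)(n^q - 1) \geq (n-1)^2 \geq 1$, it suffices that $2q\,c_1^{q-1} < 1$ — but that is false since $c_1^{q-1} \geq 1$. So the crude bound $R(q) \geq 1$ is too weak; instead I would use $R(q) = (n-1)(n^q - 1) \geq (n-1)(n-1) \cdot \frac{\ln n^q}{\ln n}\cdot\frac{1}{q}$...

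Actually, the cleaner route for the lower bound: since $R(q) \to (n-1)^2$ is bounded \emph{below} by a positive constant but the relevant comparison is that $g(c)$ must \emph{exceed} a quantity that does not shrink, while $g(c) \leq 2q c^{q-1}$ and $q \to 1$, I need $c^{q-1}$ bounded below by a constant, i.e. $(q-1)\log c \geq$ const, i.e. $\log c \gtrsim x$. Concretely: if $c \leq \exp(x \cdot \delta)$ with $\delta$ small, then $c^{q-1} = \exp((q-1)\log c) \leq \exp(\delta)$, so $g(c) \leq 2q\exp(\delta) \leq 3e^{\delta}$, which is $< (n-1)^2 \leq R(q)$ once $\delta$ is chosen small (possible since $(n-1)^2 \geq 1$; if $n = 2$ one must be slightly more careful and use $R(q) = n^q - 1 = 2^q - 1 \to 1$ with $2^q - 1 \geq q\ln 2$, so $R(q) \gtrsim 1$ still, just with a worse constant, and one picks $\delta$ accordingly — or simply note $n \geq 2$ makes $R(q) \geq 2^q - 1 \geq 2(q-1)\ln 2 \cdot$, hmm). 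I will handle the small-$n$ boundary by keeping $R(q) \geq \min\{1, (n-1)^2\}$ type estimates explicit. This yields $c(x) \geq \exp(\delta x)$, completing both directions and hence $c(x) \in \Theta(\exp(x)) = \Theta(\exp(1/(p-1)))$.

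The main obstacle I anticipate is the lower bound, specifically being careful that $R(q) = (n-1)(n^q-1)$ does \emph{not} tend to $0$ as $q \to 1^+$ — it tends to $(n-1)^2 > 0$ for $n \geq 2$ — so that $g(c)$ genuinely must stay bounded away from zero, forcing $c^{q-1} \gtrsim 1$ and thus $\log c \gtrsim 1/(q-1) = x$. Getting the constants right uniformly in $x \geq 1$ (equivalently $1 < p \leq 2$; for $p > 2$, $x < 1$ and $c(x)$ is bounded, consistent with $\exp(\Theta(1))$) and cleanly separating the roles of the additive versus multiplicative constants hidden in $\Theta$ is the fiddly part, but the mean-value-theorem sandwiching of $g(c)$ between $q(c+1)^{q-1}$ and $q(c+2)^{q-1}$ makes the estimates routine once set up.
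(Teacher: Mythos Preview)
Your approach is sound and rests on the same key observation as the paper: the mean value theorem gives $(c+2)^{q}-(c+1)^{q}=q\,\xi^{\,q-1}$ for some $\xi\in(c+1,c+2)$, so the defining equation is effectively $q\,c^{q-1}\approx (n-1)(n^{q}-1)$, and inverting this with $q-1=1/x$ yields $\log c(x)$ of order $x$.

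The paper executes this somewhat differently from your explicit upper/lower bound sandwiching. Rather than exhibiting $c_0=n^{3x}$ for the upper bound and $c_1=e^{\delta x}$ for the lower bound and checking each side, the paper observes that at the infimum $c(t)$ the defining inequality becomes an \emph{equality}, so $q\,(c(t)+1+\varepsilon)^{q-1}=(n-1)(n^{q}-1)$ exactly; taking logarithms and passing to the limit $t\to\infty$ (equivalently $q\to 1^{+}$) then gives the precise rate
\[
\lim_{t\to\infty}\frac{1}{t}\log c(t)=2\log(n-1),
\]
from which $\log c(t)\in\Theta(t)$ follows in one stroke. This buys the exact exponential base $(n-1)^{2}$, whereas your route only pins $c(x)$ between $e^{\delta x}$ and $n^{3x}$ with non-matching constants. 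Conversely, your approach is more hands-on and avoids having to justify interchanging limits and logarithms.

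One remark on your lower-bound difficulty: the trouble you hit at $n=2$ is genuine, not an artifact of your estimates. The paper's own limit reads $2\log(n-1)=0$ when $n=2$, so neither argument actually yields $\log c(t)\in\Theta(t)$ there; the stated result should be read for $n\geq 3$. For $n\geq 3$ your sketch goes through once you sharpen $g(c)\leq q(c+2)^{q-1}\to e^{\delta}$ (rather than the cruder $\leq 4e^{\delta}$) and pick any $\delta<2\log(n-1)$.
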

\begin{proof}
	Let $t = \frac{1}{p-1} \Leftrightarrow p=1+\frac{1}{t}$ for $t>0$ (and $p > 1$), 
	and set 
	\begin{align*}
		c(t):=\inf\{c\mid (c+2)^p-(c+1)^p>(n-1)(m^p-1)\}\;,
	\end{align*}
	for $n,m\in\NN$ fixed.
	Define the function $g(x)=(x+1)^p$ for $x > 0$. 
	
	By the mean value theorem, for any $x$, there exists some $\varepsilon\in [0,1]$ such that \[g'(x+\varepsilon)=p(x+1+\varepsilon)^{p-1}=(x+2)^p-(x+1)^p.\] 
	This $\varepsilon$ varies with each $x$ chosen, so we can consider $\varepsilon$ as a function of $x$. 
	We claim that $\varepsilon(x)$ is continuous. 
	To see this, we observe that 
	\begin{align*}
		p(x+1+\varepsilon(x))^{p-1}&=(x+2)^{p}-(x+1)^p\\
		\Leftrightarrow~p^{\frac{1}{p-1}}(x+1+\varepsilon(x))&=((x+2)^p-(x+1)^p)^{\frac{1}{p-1}}\\
		\Leftrightarrow~ x+1+\varepsilon(x)&=\frac{((x+2)^p-(x+1)^p)^{\frac{1}{p-1}}}{p^{\frac{1}{p-1}}}\\\Leftrightarrow~\varepsilon(x)&=\frac{((x+2)^p-(x+1)^p)^{\frac{1}{p-1}}}{p^{\frac{1}{p-1}}}-(x+1)\;.
	\end{align*}
	Because $p>1$ and $x>0$ by assumption, 
	$\varepsilon(x)$ is a continuous function. 
	
	Now define $d(t)=\varepsilon(c(t))$.
	Then, by definition, $g'(c(t)+d(t))$ can be written as the three expressions
	\begin{align*}
		g'(c(t)+d(t)) &=(1+\frac{1}{t})(c(t)+1+d(t))^\frac{1}{t}\\
		&=(c(t)+2)^{\frac{1}{t}}-(c(t)+1)^{\frac{1}{t}}\\
		&=(n-1)m^{1+\frac{1}{t}}-(n-1)\;.
	\end{align*} 
	Therefore, we get that
	\begin{align*}
		\lim_{t\to\infty}(1+\frac{1}{t})(c(t)+1+d(t))^\frac{1}{t}=(n-1)m-(n-1)=(m-1)(n-1)\;.
	\end{align*}
	Taking the logarithm of both sides, which is justified by positive continuity, 
	we get
	\begin{align*}
		\lim_{t\to\infty}\log(1+\frac{1}{t})+\frac{1}{t}\log(c(t)+d(t)+1)=\lim_{t\to\infty}\frac{1}{t}\log(c(t))=C\;,
	\end{align*} 
	where $C=\log((m-1)(n-1))$ as $d(t)\in [0,1]$.
%
%
\end{proof}

Finally, \Cref{eq:choosing-c} also clarifies why the approach using AR embeddings does not work for the $1$-norm: 
For $p = 1$, the left-hand side becomes a constant~$1$. 
An alternative construction specifically for $p = 1$ was introduced by \citet{chen2022manhattan}. 
We refer to their embeddings as \emph{max-rank embeddings}---due to their dependency on maximum ranks. 

\begin{defn}[Max-Rank Embedding]\label{def:max-rank-embedding}
	Given a preference profile $\mcP$ with $m$ alternatives and $n$ voters as well as a constant $c>0$, 
	a max-rank embedding sets $\mbfv_i=m \mbfe_i$ and 
	\begin{align*}
		a_j^{(i)}=\begin{cases}\rk_ij-\mathrm{mk}_j & i=g_j\\c+2\rk_ij + \sum_{k=1}^n (\rk_kj-\mathrm{mk}_j) & i \neq g_j\;,\end{cases}
	\end{align*}
	where $g_j=\arg\max_i \rk_ij$ and $\mathrm{mk}_j=\max_i \rk_ij$.
\end{defn}
This definition ensures that when calculating $\|\mbfv_i-\mbfa_j\|_1$, 
the expression reduces to a linear function of $\rk_ij$,
showing that larger distances correspond to lower (i.e., larger-in-number) ranks. 
However, unlike AR embeddings, max-rank embeddings do not seem to generalize to all $p > 1$. 

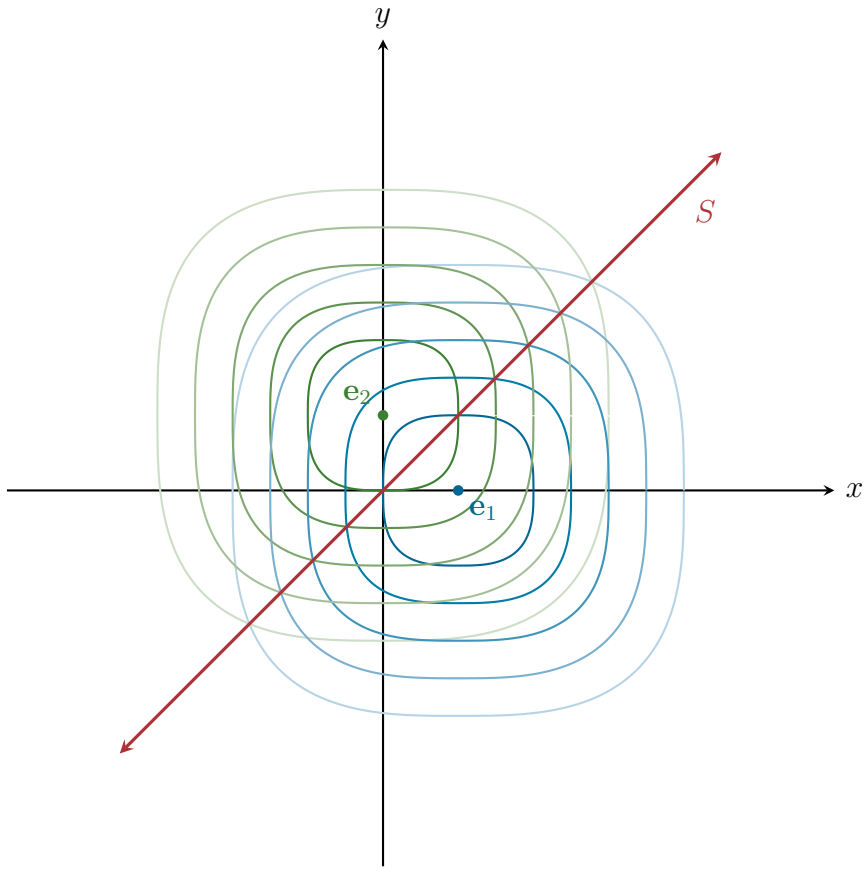
\begin{figure}[t]
	\centering
\begin{tikzpicture}[scale=0.75]
	\newcommand{\figtwocolorone}{MidnightBlue}
	\newcommand{\figtwocolortwo}{OliveGreen}
	\newcommand{\figtwocolorthree}{Maroon}
	\def\n{100} 
	\def\a{1}    
	
	
	\draw[-stealth,thick] (-5, 0) -- (6, 0) node[right] {$x$};
	\draw[-stealth,thick] (0, -5) -- (0, 6) node[above] {$y$};

	\draw[\figtwocolorone, thick] plot[domain=0:2*pi, samples=\n, smooth] ({1+abs(\a*cos(\x r))^(2/3)*sign(cos(\x r))}, {abs(\a*sin(\x r))^(2/3)*sign(sin(\x r))});
	\draw[\figtwocolortwo, thick] plot[domain=0:2*pi, samples=\n,smooth] ({abs(\a*cos(\x r))^(2/3)*sign(cos(\x r))}, {1+abs(\a*sin(\x r))^(2/3)*sign(sin(\x r))});
	
	\draw[\figtwocolorone!20, thick] plot[domain=0:2*pi, samples=\n,smooth] ({1+3*abs(\a*cos(\x r))^(2/3)*sign(cos(\x r))}, {3*abs(\a*sin(\x r))^(2/3)*sign(sin(\x r))});
	\draw[\figtwocolortwo!20, thick] plot[domain=0:2*pi, samples=\n,smooth] ({3*abs(\a*cos(\x r))^(2/3)*sign(cos(\x r))}, {1+3*abs(\a*sin(\x r))^(2/3)*sign(sin(\x r))});
	
	\draw[\figtwocolorone!80, thick] plot[domain=0:2*pi, samples=\n,smooth] ({1+1.5*abs(\a*cos(\x r))^(2/3)*sign(cos(\x r))}, {1.5*abs(\a*sin(\x r))^(2/3)*sign(sin(\x r))});
	\draw[\figtwocolortwo!80, thick] plot[domain=0:2*pi, samples=\n,smooth] ({1.5*abs(\a*cos(\x r))^(2/3)*sign(cos(\x r))}, {1+1.5*abs(\a*sin(\x r))^(2/3)*sign(sin(\x r))});
	
	\draw[\figtwocolorone!60, thick] plot[domain=0:2*pi, samples=\n,smooth] ({1+2*abs(\a*cos(\x r))^(2/3)*sign(cos(\x r))}, {2*abs(\a*sin(\x r))^(2/3)*sign(sin(\x r))});
	\draw[\figtwocolortwo!60, thick] plot[domain=0:2*pi, samples=\n] ({2*abs(\a*cos(\x r))^(2/3)*sign(cos(\x r))}, {1+2*abs(\a*sin(\x r))^(2/3)*sign(sin(\x r))});
	
	\draw[\figtwocolorone!40, thick] plot[domain=0:2*pi, samples=\n,smooth] ({1+2.5*abs(\a*cos(\x r))^(2/3)*sign(cos(\x r))}, {2.5*abs(\a*sin(\x r))^(2/3)*sign(sin(\x r))});
	\draw[\figtwocolortwo!40, thick] plot[domain=0:2*pi, samples=\n,smooth] ({2.5*abs(\a*cos(\x r))^(2/3)*sign(cos(\x r))}, {1+2.5*abs(\a*sin(\x r))^(2/3)*sign(sin(\x r))});
	
	\draw[stealth-stealth,color=\figtwocolorthree,very thick] (-3.5, -3.5) -- (4.5, 4.5);
	\fill[\figtwocolorthree] (4,4) circle (0pt) node[below right] {$S$};
	
	\fill[\figtwocolorone] (1,0) circle (2pt) node[below right] {$\mbfe_1$};
	\fill[\figtwocolortwo] (0,1) circle (2pt) node[above left] {$\mbfe_2$};
	
\end{tikzpicture}		
	\caption{%
		In $\RR^2$, the pairs of balls with equal radii under the $3$-norm around $\mbfe_1 = (1,0)$ (blue) and $\mbfe_2 = (0,1)$ (green) 
		intersect on a line (red).
	}\label{fig:hyperplane}
	\Description{TODO} 
\end{figure}

\subsection{Dimensionality Depending on the Number of Alternatives}

To establish rank embeddability as a function of $m$, 
we need to show that any preference profile $\mcP$ with $m$ alternatives can be embedded into $(\RR^{m-1},\|\cdot\|_p)$ for $1< p\leq \infty$. 
Again, the case $p=1$ is handled by \cite{chen2022manhattan}. 
While the case $p = \infty$ is treated separately in our proof, 
the generalization for $1 < p < \infty$ relies on the following technical lemma, 
which is implicitly used by \cite{bogomolnaia2007euclidean}. 
We provide some geometric intuition for this claim using $(\RR^2,\|\cdot\|_3)$ in \Cref{fig:hyperplane}.

\begin{restatable}[Hyperplane separating unit vectors]{lemma}{hyperplanelemma}\label{lemma2}
	\hspace*{0.1em}\\
	Let $1<p<\infty$ and $m>0$, 
	and let $\mbfe_i$ denote the $i^{\text{th}}$ unit vector of $\RR^m$. 
	Then for all $1\leq i\neq j \leq m$, the set 
    \begin{align*}
        S\coloneq\{\mbfx\in \RR^m\mid\|\mbfx-\mbfe_i\|_p=\|\mbfx-\mbfe_j\|_p\}
    \end{align*}
    is precisely the hyperplane $H$ parameterized by $\mbfx\in \RR^m$ such that $x_i=x_j.$
\end{restatable}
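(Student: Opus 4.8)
The plan is to collapse the equation $\|\mbfx-\mbfe_i\|_p=\|\mbfx-\mbfe_j\|_p$ to a single-variable identity and then show that the function appearing in it is strictly monotone. Since $t\mapsto t^p$ is strictly increasing on $[0,\infty)$, the equation is equivalent to $\|\mbfx-\mbfe_i\|_p^p=\|\mbfx-\mbfe_j\|_p^p$. Expanding both sides as sums of $p$-th powers of coordinates, every coordinate $k\notin\{i,j\}$ contributes $|x_k|^p$ to both norms and cancels, so
\begin{align*}
\mbfx\in S \iff |x_i-1|^p-|x_i|^p=|x_j-1|^p-|x_j|^p \iff \varphi(x_i)=\varphi(x_j),
\end{align*}
where $\varphi(t):=|t|^p-|t-1|^p$. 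With $H=\{\mbfx\in\RR^m\mid x_i=x_j\}$, the inclusion $H\subseteq S$ is then immediate, since $x_i=x_j$ forces $\varphi(x_i)=\varphi(x_j)$.

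The substance is the reverse inclusion $S\subseteq H$, for which it suffices to show $\varphi$ is injective; I will in fact show it is strictly increasing. Write $\varphi(t)=h(t)-h(t-1)$ with $h(u):=|u|^p$. Because $p>1$, the map $h$ is differentiable on all of $\RR$, with $h'(u)=p|u|^{p-1}\sgn(u)$ (so $h'(0)=0$), and $h'$ is continuous since $p-1>0$. Moreover $h'$ is \emph{strictly} increasing: it equals $pu^{p-1}$ on $(0,\infty)$ and $-p(-u)^{p-1}$ on $(-\infty,0)$ — each strictly increasing — while $h'(0)=0$ lies strictly between the values on the two half-lines. Hence $\varphi'(t)=h'(t)-h'(t-1)>0$ for every $t\in\RR$, so $\varphi$ is strictly increasing and therefore injective. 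Consequently $\varphi(x_i)=\varphi(x_j)$ forces $x_i=x_j$, giving $S\subseteq H$, and combined with the first step this yields $S=H$.

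The one place where care is needed — and where the hypothesis $1<p<\infty$ is genuinely used — is the differentiability and strict monotonicity of $h'$ at the origin. For $p=1$ the function $h$ is affine on each of $(-\infty,0]$ and $[0,\infty)$, so $\varphi$ is constant on $(-\infty,0]$ and on $[1,\infty)$; then $S$ strictly contains $H$ (e.g.\ it contains every $\mbfx$ with $x_i,x_j\leq 0$) and the conclusion fails, which is precisely why the $1$-norm requires the separate treatment of \cite{chen2022manhattan}. Apart from this point, the argument is merely the bookkeeping of expanding the $p$-th powers and cancelling the shared coordinates; no arithmetic estimates are involved.
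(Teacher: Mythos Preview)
Your proof is correct and follows essentially the same approach as the paper: cancel the shared coordinates to reduce the equidistance condition to $\varphi(x_i)=\varphi(x_j)$ for a single-variable function, and then show that function is strictly monotone. Your execution is in fact slightly cleaner, since your convexity argument (that $h'(u)=p|u|^{p-1}\sgn(u)$ is strictly increasing, hence $\varphi'>0$) establishes injectivity on all of $\RR$, whereas the paper's version only explicitly treats the positive half-line.
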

\begin{proof}
	We begin by establishing that $H\subseteq S$. 
	To this end, suppose that $\mbfx\in \RR^m$ with $x_i=x_j$. 
	Then, 
	\begin{align*}
		\|\mbfx-\mbfe_i\|_p^p=|x_i-1|^p+|x_j|^p+\sum_{\substack{1\leq k\leq m\\k\neq i,j}}|x_k|^p\;,
	\end{align*}
	and likewise, 
	\begin{align*}
		\|\mbfx-\mbfe_j\|_p^p=|x_j-1|^p+|x_i|^p+\sum_{\substack{1\leq k\leq m\\k\neq i,j}}|x_k|^p\;.
	\end{align*}
	Because $x_i=x_j$, 
	we have
	\begin{align*} 			|x_i-1|^p+|x_j|^p=|x_i|^p+|x_j-1|^p\;.
	\end{align*}
	Furthermore, because the sums are identical,
	\begin{align*} 		
		\|\mbfx-\mbfe_i\|_p^p=\|\mbfx-\mbfe_j\|_p^p\;, 
	\end{align*}
	and because the summands are absolute values, everything is positive, so 
	\begin{align*} 		
		\|\mbfx-\mbfe_i\|_p=\|\mbfx-\mbfe_j\|_p\;.
	\end{align*}
	
	To establish that $S \subseteq H$, fix some $\mbfx\in S$. 
	In this case, 
	$\|\mbfx-\mbfe_i\|_p=$ $\|\mbfx-\mbfe_j\|_p$, 
	so 
	\begin{align}\label{eq:sumsum}
		\|\mbfx-\mbfe_i\|_p^p=\|\mbfx-\mbfe_j\|_p^p\;,
	\end{align}
	where 
	\begin{align}\label{eq:sum1}
		\|\mbfx-\mbfe_i\|_p^p=|x_i-1|^p+|x_j|^p+\sum_{\substack{1\leq k\leq m\\k\neq i,j}}|x_k|^p\;,
	\end{align}
	and 
	\begin{align}\label{eq:sum2}
		\|\mbfx-\mbfe_j\|_p^p=|x_j-1|^p+|x_i|^p+\sum_{\substack{1\leq k\leq m\\k\neq i,j}}|x_k|^p\;.
	\end{align}
	As the summands in \Cref{eq:sum1,eq:sum2} are identical, 
	\Cref{eq:sumsum} reduces to 
	\begin{align*}
		&|x_i-1|^p+|x_j|^p=|x_j-1|^p+|x_i|^p\\ \Longleftrightarrow~& |x_j|^p-|x_j-1|^p=|x_i|^p-|x_i-1|^p\;.
	\end{align*}
	Now define $f(x)=|x|^p-|x-1|^p$ for $x>0$. 
	For $x\geq1$, this is clearly monotonic. 
	Moreover, for $0<x<y<1$, we have that $0<|y-1|<|x-1|<1$. 
	Then, 
	\begin{align*}
		|x|^p-|x-1|^p<|y|^p-|x-1|^p<|y|^p-|y-1|^p\;,
	\end{align*}
	so $f$ is monotonic for all $x > 0$. 
%
	Hence, for $s,t>0$, $f(s)=f(t)$ implies $s=t$. 
	Thus, we have $x_i=x_j$, and consequently, $S=H$.
\end{proof}

We note that this lemma relies on the chosen centers of $\mbfe_i$ and $\mbfe_j$ and does not generalize to two arbitrary points for $p\neq 2$.
Using \Cref{lemma2}, we can establish the desired rank-embeddability result depending on the number of alternatives. 

\begin{figure}[t]
	\centering
\tdplotsetmaincoords{70}{110}
\begin{tikzpicture}[scale=4.75,tdplot_main_coords]
	\coordinate (A) at (1,0,0);
	\coordinate (B) at (0,1,0);
	\coordinate (C) at (0,0,1);
        \coordinate (v1) at (0.9,0.65,1.0);
	
	\coordinate (ABmid) at ($ (A)!0.5!(B) $);
	\coordinate (ACmid) at ($ (A)!0.5!(C) $);
	\coordinate (BCmid) at ($ (B)!0.5!(C) $);

    \coordinate (ABCmid) at ($ (BCmid)!0.33333!(A) $);
	
	\filldraw[fill=MidnightBlue!10,opacity=0.5] (A) -- (B) -- (C) -- cycle;

    \fill[fill=Maroon!10,opacity=0.5] (C) -- (BCmid) -- (ABCmid) -- cycle;
	
	\draw[-stealth] (0,0,0) -- (1.2,0,0) node[anchor=north east]{$x$};
	\draw[-stealth] (0,0,0) -- (0,1.2,0) node[anchor=north west]{$y$};
	\draw[-stealth] (0,0,0) -- (0,0,1.2) node[anchor=south]{$z$};
	
	\draw[thick] (A) -- (B) -- (C) -- (A);
	
	\draw[dashed,thick,Maroon] (A) -- (BCmid);
	\draw[dashed,thick,Maroon] (B) -- (ACmid);
	\draw[dashed,thick,Maroon] (C) -- (ABmid);
	
	\filldraw[black] (A) circle (0.4pt) node[anchor=north west] {$\mbfa_1 = (1,0,0)$};
	\filldraw[black] (B) circle (0.4pt) node[anchor=south west] {$\mbfa_2 = (0,1,0)$};
	\filldraw[black] (C) circle (0.4pt) node[anchor=south west] {$\mbfa_3 = (0,0,1)$};
	
	\filldraw[black] (ABmid) circle (0.3pt);
	\filldraw[black] (ACmid) circle (0.3pt);
	\filldraw[black] (BCmid) circle (0.3pt);
    \filldraw[Maroon] (v1) circle (0pt) node[anchor=south west] {$a_3 \succ a_2 \succ a_1$};
	
\end{tikzpicture}
	\caption{%
		Median-based embedding of the alternatives for $m = 3$, 
		with the corresponding $2$-dimensional hyperplane shaded in blue 
		and the medians between any pair of alternatives drawn as dotted red lines. 
        With this setup, for example, voters with preference $a_3\succ a_2 \succ a_1$ can be placed in the red-shaded region of the hyperplane.
	}\label{fig:median-embedding}
	\Description{TODO} 
\end{figure}
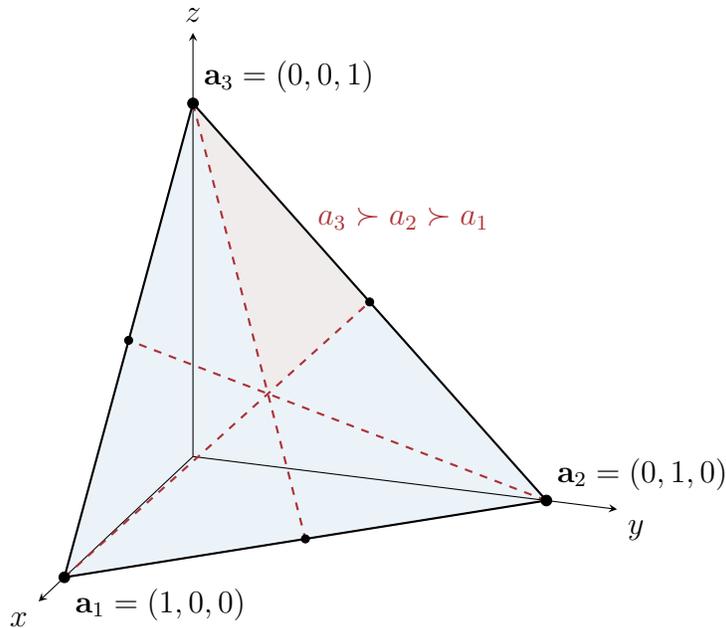

\begin{prop}[Dimensionality depending on $m$]\label{prop:prop2}
	\hspace*{0.1em}\\
	For $1 < p \leq \infty$, any preference profile $\mcP$ with $m$ alternatives can be rank-embedded into $(\RR^{m-1},\|\cdot\|_p)$.
\end{prop}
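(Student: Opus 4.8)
The plan is to embed the $m$ alternatives at the standard basis vectors $\mbfa_j = \mbfe_j \in \RR^m$, and to place each voter $i$ somewhere on the hyperplane $\{x_1 + \dots + x_m = 1\}$ (the affine hull of the simplex), which is an $(m-1)$-dimensional affine subspace — so after an affine identification we really are working in $\RR^{m-1}$. With this choice of alternative coordinates, \Cref{lemma2} tells us that the locus of points equidistant (in $p$-norm) from $\mbfe_j$ and $\mbfe_k$ is exactly the hyperplane $\{x_j = x_k\}$. So the preference $a_j \succ_i a_k$ of voter $i$ translates precisely into the \emph{linear} condition $(\mbfv_i)_j > (\mbfv_i)_k$ on the coordinates of $\mbfv_i$ (the strictly-closer-to-$\mbfe_j$ side of the bisecting hyperplane is the side where the $j$-th coordinate is larger, which one checks at a single test point). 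So I need: for each voter $i$ with preference ordering $a_{\sigma_i(1)} \succ_i a_{\sigma_i(2)} \succ_i \dots \succ_i a_{\sigma_i(m)}$, a point $\mbfv_i$ in the simplex's affine hull whose coordinates are ordered the same way, i.e. $(\mbfv_i)_{\sigma_i(1)} > (\mbfv_i)_{\sigma_i(2)} > \dots > (\mbfv_i)_{\sigma_i(m)}$. That is trivially possible — e.g. take $(\mbfv_i)_{\sigma_i(r)}$ proportional to $m - r$ (suitably normalized so the coordinates sum to $1$, and this works since $m-r$ is strictly decreasing in $r$).

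\emph{Carrying it out, in order.} First, set $\mbfa_j = \mbfe_j$ for $j \in [m]$. Second, invoke \Cref{lemma2}: for each pair $j \neq k$, $\{\,\mbfx : \|\mbfx - \mbfe_j\|_p = \|\mbfx - \mbfe_k\|_p\,\}$ is the hyperplane $\{x_j = x_k\}$; evaluating $\|\mbfx - \mbfe_j\|_p^p - \|\mbfx - \mbfe_k\|_p^p$ at, say, $\mbfx = \mbfe_j$ (or any point with $x_j > x_k$) shows the sign is negative there, so $\|\mbfx - \mbfe_j\|_p < \|\mbfx - \mbfe_k\|_p \iff x_j > x_k$. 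Third, for each voter $i$, define $\mbfv_i \in \RR^m$ by $(\mbfv_i)_{a} = (m - \rk_i a)/\binom{m}{2}$ (or any strictly decreasing-in-rank, sum-to-one assignment); then $a_j \succ_i a_k$ (i.e. $\rk_i j < \rk_i k$) is equivalent to $(\mbfv_i)_j > (\mbfv_i)_k$, which by the previous step is equivalent to $\|\mbfv_i - \mbfa_j\|_p < \|\mbfv_i - \mbfa_k\|_p$. This is exactly the rank-preserving condition of \cref{def:embeddability} (ties never arise since preferences are strict and the coordinates of each $\mbfv_i$ are pairwise distinct). Fourth, observe that all the $\mbfa_j$ and $\mbfv_i$ lie in the affine hyperplane $\{\sum_\ell x_\ell = 1\}$, which is isometric to $(\RR^{m-1}, \|\cdot\|_{p}')$ for the norm induced by restricting $\|\cdot\|_p$ to the parallel linear subspace; composing with the affine isomorphism $\RR^{m-1} \to \{\sum x_\ell = 1\}$ transports the whole configuration into $\RR^{m-1}$ while preserving all pairwise distances, hence all the $\|\cdot\|$-inequalities. (If one prefers to stay literally inside an $\ell_p$-space one can instead note that everything is happening inside a fixed $(m-1)$-dimensional linear subspace after translating by $-\mbfe_m$, namely $\{x_m = 0\}$-shifted; but the cleanest statement is the affine-subspace one, since Definition~\ref{def:embeddability} only cares about relative distances.)

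\emph{The main obstacle} is conceptual rather than computational: it is recognizing that \Cref{lemma2} is exactly the right tool, and that the crucial point is that the bisectors of \emph{all} $\binom{m}{2}$ pairs are simultaneously the coordinate-comparison hyperplanes $\{x_j = x_k\}$ — this is special to the centers being the standard basis vectors and genuinely fails for $p \neq 2$ with other centers (as the remark after the lemma flags). Once that is in hand, placing the voters is a one-line observation, because "point of the simplex with coordinates ordered like a given permutation" always exists. The only mild care needed is (a) checking the correct orientation, i.e. which side of $\{x_j = x_k\}$ is closer to $\mbfe_j$, which a single test-point evaluation settles, and (b) the bookkeeping that confining everything to the affine hyperplane $\{\sum x_\ell = 1\}$ legitimately counts as an embedding into $\RR^{m-1}$, which is immediate since that hyperplane with the restricted metric is affinely isometric to a normed $(m-1)$-space and \cref{def:embeddability} is purely about distance comparisons. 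I expect the write-up to be short and, pleasingly, free of the arithmetic that \cite{bogomolnaia2007euclidean} needed, exactly as promised in the introduction.
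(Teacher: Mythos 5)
Your proposal is correct and follows essentially the same route as the paper's proof: place the alternatives at the standard basis vectors, invoke \Cref{lemma2} to identify the $p$-norm bisectors with the coordinate-comparison hyperplanes $\{x_j = x_k\}$, and put each voter in the open cell of that arrangement (within the affine hull $\sum_\ell x_\ell = 1$) matching its preference order. Your explicit voter coordinates $(m-\rk_i a)/\binom{m}{2}$ and the test-point orientation check are a slightly more concrete rendering of the paper's ``perturb from the centroid'' step, and your care about identifying the affine hyperplane with an $(m-1)$-dimensional normed space is if anything more scrupulous than the paper's own treatment of that point.
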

\begin{proof}
	We show that the following embedding construction introduced by \cite{bogomolnaia2007euclidean}, 
	which we call the \emph{median-based (MB) embedding}, 
	yields a rank-preserving embedding for any finite $p > 1$. 
	To create a median-based embedding, 
    we place the alternatives $\mbfa_1,\ldots,\mbfa_m$ at the basis vectors $\mbfe_1,\ldots,\mbfe_m$, 
	as illustrated for two dimensions in \Cref{fig:median-embedding}. 
	
	Again, we distinguish two cases, $1 < p < \infty$ and $p = \infty$. 
	
	\emph{Case 1: $1 < p < \infty$.}\\
	For $1 < p < \infty$, the points representing the $m$ alternatives in an MB embedding all lie on the $(m-1)$-dimensional affine hyperplane $P$ defined by $x_1+\cdots+x_m=1$. 
	By \Cref{lemma2},
	 the medians between any pair of alternatives are linear. 
	Hence, the preference ordering of a single voter can be represented as the intersection of the half spaces in which the preference lies. 
	This intersection will be non-empty because all medians intersect at the centroid, 
	and then any perturbation orthogonal to each median and in the direction of the preference will constitute a valid coordinate for the voter with such a profile. 
	
	\emph{Case 2: $p = \infty$.}\\
	For $p = \infty$, 
	\begin{align*}
		S\coloneq\{\mbfx\in \RR^m \mid \|\mbfx-\mbfe_i\|_\infty=\|\mbfx-\mbfe_j\|_\infty\}
	\end{align*}
	is not linear, 
 	and \Cref{lemma2} does not generalize to $p=\infty$. 
 	However, it turns out that $S\cap P$ is a linear subspace in $P$, 
 	which suffices to ensure rank embeddability. 

	To see this, 
	we can first restrict to the quadrant where each $x_i\geq 0.$ 
	Now assume that $x_i=x_j$ and $\mbfx\in P$. 
	Then 
	\begin{align*}
		&\max\{|x_i-1|,|x_j|,|x_k|\mid k\neq i,j\}\\=&\max\{|x_j-1|,|x_i|,|x_k|\mid k\neq i,j\}\;,
	\end{align*}
	so $\|\mbfx-\mbfe_i\|_\infty=\|\mbfx-\mbfe_j\|_\infty$.
	 
	Now assume that $\mbfx\in S\cap P$. 
	As we consider only the positive face of the simplex, we have
	\begin{align}\label{eq:maxcoord}
		\max\{1-x_i,x_j,x_k\mid k\neq i,j\}=\max\{x_i,1-x_j,x_k\mid k\neq i,j\}\;.
	\end{align}
	 Finally, suppose that some $x_k$ (for $k\neq i,j$) is the largest value of the elements in the set on the left-hand side of \Cref{eq:maxcoord}.  
	 Then $1-x_i\leq x_k$. 
	 However, $x_k+x_i \leq 1$ by the condition of the hyperplane, so $1-x_i\leq x_k \leq 1-x_i$ and $x_k=1-x_i$. 
	 By the same argument, the value of $x_k$ must be the maximum value of the right-hand side of \Cref{eq:maxcoord}, 
	 so $1-x_j \leq x_k$ and $x_j+x_k\leq 1$, such that $x_k=1-x_j$. 
	 Hence, $1-x_i=1-x_j$, and consequently, $x_i=x_j$. 
	 If the maximum value is attained at $x_j$, 
	 then the same argument applies with $1-x_i\leq x_j$ but $x_i+x_j\leq 1$, 
	 so $1-x_i\leq x_j\leq 1-x_i$ so $x_i+x_j=1$, and by symmetry, $x_i=x_j$ must hold. 
	 Thus, the medians will all be the spaces $x_i=x_j$ inside $P$.

     Joining both cases, we obtain that for all $1 < p \leq \infty$, every preference profile $\mcP$ with $m$ alternatives rank-embeds into $(\RR^{m-1},\|\cdot\|_p)$ using a median-based embedding. 
\end{proof}

Our proof of \Cref{prop:prop2} shows that there are only a few key requirements for a norm to permit rank-preserving median-based embeddings, 
simultaneously exposing which details from Proposition~5 by \citet{bogomolnaia2007euclidean} are strictly necessary.
Notably, the construction used in \Cref{prop:prop2} again cannot work for $p=1$, 
as the medians given by $\|\mbfx-\mbfe_i\|_1=\|\mbfx-\mbfe_j\|_1$ are neither linear nor otherwise well-behaved. 
It is thus not surprising that \citet{chen2022manhattan} use a very different construction, 
similar in flavor to the max-rank embeddings specified in \cref{def:max-rank-embedding}. 
We call this construction \emph{rank-pivot embedding} (because alternative $m$ can be viewed as a pivot) and restate it here for completeness. 

\begin{defn}[Rank-Pivot Embedding]\label{def:rank-pivot-embedding}
    Given a preference profile $\mcP$ with $m$ alternatives and $n$ voters, 
    for $k \in \{1, \dots, m-1\}$, $j \in \{1,\dots, m\}$, and $i \in \{1,\dots, n\}$,  
    a Rank-Pivot Embedding sets 
    \begin{align*}
        a_j^{(k)} =& \begin{cases}
            2m&j = k~\text{and}~j < m\\
            0&j\neq k~\text{or}~j = m\;,\text{ and}
        \end{cases}\\
        v_i^{(k)} =& \begin{cases}
            2m-\rk_i k&\rk_i k < \rk_i m\\
            m-\rk_i k&\rk_i k > \rk_i m\;.
        \end{cases}
    \end{align*}
\end{defn}

Proving \Cref{thm1} is now straightforward. 
\lpnorms
\begin{proof}
	For $1< p\leq \infty$, 
	any preference profile $\mcP_{A,V}$ rank-embeds into $(\RR^n,\|\cdot\|_p)$ by \cref{prop:prop1}, 
	and any preference profile $\mcP$ with $m$ alternatives rank-embeds into $(\RR^{m-1},\|\cdot\|_p)$ by \cref{prop:prop2}. 
	The case $p = 1$ is treated separately by \cite{chen2022manhattan}.
	Observing that any preference profile that rank-embeds into $\RR^d$ also rank-embeds into $\RR^{d'}$ for any $d'>d$, 
	the result follows.
\end{proof}
\section{Norm-Independent Rank-Preserving Embeddings into  \texorpdfstring{$\RR^2$}{TEXT}}\label{sec:thm2}

The main goal of this investigation was initially to extend the result inspiring \Cref{thm1} to \emph{any} norm. 
While the general problem remains out of reach with our current techniques, 
we \emph{can} obtain the desired result for any preference profile with \emph{two (types of) voters}, 
which should help generalize studies of facility-location problems like those by \citet{larson1983facility} or \citet{chan2021mechanism}. 
This still constitutes significant progress,  
and it should provide some foundation for future research into further abstraction. 

\begin{restatable*}[Rank embeddability for two {[}types of{]} voters under arbitrary norms]{thm}{twovoters}\label{thm2}
	Given $m$ alternatives, let $\mcP$ be a preference profile featuring two (types of) voters. 
	Then $\mcP$ rank-embeds into $(\RR^2,\|\cdot\|)$ for any norm $\|\cdot\|$ on~$\RR^2$.
\end{restatable*}

Our proof relies on the following technical lemma, 
the gist of which we visualize in \Cref{fig:annulus}. 

\begin{restatable}[Placing two voters and three alternatives in {$\RR^2$} under arbitrary norms]{lemma}{annuluslemma}\label{lemma3}
	Let~$\mbfv_1,\mbfv_2$ be distinct points in $(\RR^2,\|\cdot\|)$, 
	where $\|\cdot\|$ denotes any norm on $\RR^2$. 
	Let $B(\mbfv_1,r_1)\subsetneq B(\mbfv_1,r_2)$ denote two balls under $\|\cdot\|$, 
	where 
	\begin{align*}
		0<r_1<r_2<\|\mbfv_1-\mbfv_2\|~\text{~such that~}~\mbfv_2\in \left(\overline{B(\mbfv_1,r_2)}\right)^c\;,
	\end{align*}
	with $\overline{S}$ denoting the closure and $S^c$ denoting the complement of a set $S$. 
	Given $p_1\in \partial B(\mbfv_1,r_1)$ and $\mbfp_2\in \partial B(\mbfv_2,r_2)$ such that $\|\mbfv_2-\mbfp_1\|\neq$ $\|\mbfv_2-p_2\|$, 
	where $\partial S$ denotes the boundary of a set~$S$, 
	let 
	\begin{align*}
		s_1\coloneq&\min\{\|\mbfv_2-\mbfp_1\|,\|\mbfv_2-\mbfp_2\|\}\text{~and~}\\s_2\coloneq&\max\{\|\mbfv_2-\mbfp_1\|,\|\mbfv_2-\mbfp_2\|\}\;.
	\end{align*}
	Now denote the annulus around $\mbfv_2$ between radius $s_1$ and $s_2$ as 
	\begin{align*}
		A(\mbfv_2,s_1,s_2)\coloneq\{\mbfx\in \RR^2 \mid s_1<\|\mbfx-\mbfp_2\|<s_2\}\;.
	\end{align*}
	Then
	 \begin{align*}
	 	S\coloneq\left(\overline{B(\mbfv_1,r_2)}\right)^c \cap A(\mbfv_2,s_1,s_2)\cap B(\mbfv_1,\|\mbfv_1-\mbfv_2\|)
	 \end{align*}
	 is not empty.\end{restatable}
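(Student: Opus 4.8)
The plan is to produce a point of $S$ on a well-chosen norm-sphere centred at $\mbfv_1$. Write $D:=\|\mbfv_1-\mbfv_2\|$, and for $t>0$ let $C_t:=\{\mbfx\in\RR^2\mid\|\mbfx-\mbfv_1\|=t\}$. Unwinding the definitions, $S$ is exactly the set of $\mbfx$ satisfying $r_2<\|\mbfx-\mbfv_1\|<D$ and $s_1<\|\mbfx-\mbfv_2\|<s_2$; in particular, every point of $C_t$ with $t\in(r_2,D)$ automatically meets the first pair of constraints. It therefore suffices to find some $t\in(r_2,D)$ and some $\mbfx\in C_t$ with $s_1<\|\mbfx-\mbfv_2\|<s_2$.

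The crux is to compute the range of the continuous map $\mbfx\mapsto\|\mbfx-\mbfv_2\|$ on $C_t$, for $t\in(0,D)$. The triangle inequality bounds this map between $D-t$ and $D+t$ on $C_t$, and both extremes are attained: at $\mbfv_1+\frac{t}{D}(\mbfv_2-\mbfv_1)$ the distance to $\mbfv_2$ equals $D-t$, and at $\mbfv_1-\frac{t}{D}(\mbfv_2-\mbfv_1)$ it equals $D+t$, with both points lying on $C_t$. The essential observation is that these extremisers lie on the line through $\mbfv_1$ and $\mbfv_2$, which is what makes the computation valid for an \emph{arbitrary} norm---no feature of the unit ball is invoked. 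Since $C_t$ is connected (being a translated, scaled copy of the unit sphere $\{\|\cdot\|=1\}$, which is the continuous image of the connected set $\RR^2\setminus\{\mathbf{0}\}$ under $\mbfx\mapsto\mbfx/\|\mbfx\|$), the range of $\|\cdot-\mbfv_2\|$ on $C_t$ is an interval containing $D-t$ and $D+t$, hence equals $[D-t,D+t]$.

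It remains to pick $t$. Choose any $t$ with $\max\{r_2,\,D-s_2\}<t<D$; this interval is non-empty because $r_2<D$ by hypothesis and $s_2\ge s_1>0$, the positivity of $s_1$ following from $\mbfp_1\ne\mbfv_2$ (indeed $\|\mbfv_1-\mbfp_1\|=r_1<D$). For this $t$: $D-t<s_2$ by choice of $t$; $D+t>s_1$ since $t>r_2>r_1$ while $s_1\le\|\mbfv_2-\mbfp_1\|\le D+r_1$; $D-t<D+t$ since $t>0$; and $s_1<s_2$, which is exactly where the hypothesis $\|\mbfv_2-\mbfp_1\|\ne\|\mbfv_2-\mbfp_2\|$ is used. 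These four inequalities give $\max\{D-t,s_1\}<\min\{D+t,s_2\}$, so the non-empty open interval $\bigl(\max\{D-t,s_1\},\min\{D+t,s_2\}\bigr)$ lies inside $[D-t,D+t]\cap(s_1,s_2)$. Taking $\mbfx\in C_t$ with $\|\mbfx-\mbfv_2\|$ in this intersection gives $\mbfx\in S$.

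I expect no genuine obstacle here. One step worth flagging is that one really does need to traverse $C_t$: in the chosen regime the near point $\mbfv_1+\frac{t}{D}(\mbfv_2-\mbfv_1)$ is too close to $\mbfv_2$ (one checks $r_2+s_1\ge D$ always holds, since $s_1\ge D-r_2$), so the point of $S$ must be found on the far portion of $C_t$, which is where connectedness and the intermediate value theorem enter. The remaining care is purely the bookkeeping of strict versus non-strict inequalities in making $[D-t,D+t]$ overlap the open interval $(s_1,s_2)$. In contrast to the Euclidean argument of \cite{bogomolnaia2007euclidean}, the proof needs no arithmetic tied to the norm's geometry.
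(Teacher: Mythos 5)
Your proposal is correct and follows essentially the same route as the paper's proof: fix a norm-sphere $\partial B(\mbfv_1,t)$ with $r_2<t<\|\mbfv_1-\mbfv_2\|$, note that the distance to $\mbfv_2$ attains the extreme values $\|\mbfv_1-\mbfv_2\|\pm t$ at the two points of that sphere on the line through $\mbfv_1$ and $\mbfv_2$ (valid for any norm), and conclude by connectedness and the intermediate value theorem that the sphere meets the annulus. The only cosmetic difference is that the paper shows the entire interval $(s_1,s_2)$ lies in the attained range for \emph{any} such $t$, whereas you choose $t$ to guarantee merely a non-empty overlap; both suffice, and your explicit connectedness argument for the norm-sphere is a welcome addition to a step the paper leaves implicit.
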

\begin{proof} 
	Assume without loss of generality that $s_1=\|\mbfv_2-\mbfp_1\|$---%
	for $s_1=\|\mbfv_2-\mbfp_2\|$, 
	we simply change the use of the triangle inequality by swapping the points $\mbfp_1$ and $\mbfp_2$ in the argument that follows.
	Now choose $r_3$ such that $r_2<r_3<\|\mbfv_1-\mbfv_2\|$. 
	Observe that for $\mbfs\in \partial B(\mbfv_1,r_3)$, 
	$\|\mbfv_2-\mbfs\|$ is continuous, 
	and at 
	\begin{align*}
		\mbfs=\mbfv_1+\frac{r_3}{\|\mbfv_2-\mbfv_1\|}(\mbfv_2-\mbfv_1)\in \partial B(\mbfv_1,r_3)\;,
	\end{align*}
	we have 
	\begin{align*}
		\|\mbfv_2-\mbfs\|&=\|(\mbfv_2-\mbfv_1)-\frac{r_3}{\|\mbfv_2-\mbfv_1\|}(\mbfv_2-\mbfv_1)\|\\
		&=(1-\frac{r_3}{\|\mbfv_2-\mbfv_1\|})\|\mbfv_2-\mbfv_1\|\\
		&=\|\mbfv_2-\mbfv_1\|-r_3\;.
	\end{align*}
	Furthermore, at 
	\begin{align*}
		\mbfs=\mbfv_1-\frac{r_3}{\|\mbfv_2-\mbfv_1\|}(\mbfv_2-\mbfv_1)\in \partial B(\mbfv_1,r_3)\;,
	\end{align*}
	we have 
	\begin{align*}
		\|\mbfv_2-\mbfs\|&=\|\mbfv_2-(\mbfv_1-\frac{r_3}{\|\mbfv_2-\mbfv_1\|}(\mbfv_2-\mbfv_1))\|\\
		&=\|(\mbfv_2-\mbfv_1)+\frac{r_3}{\|\mbfv_2-\mbfv_1\|}(\mbfv_2-\mbfv_1)\|\\
		&=(1+\frac{r_3}{\|\mbfv_2-\mbfv_1\|})\|\mbfv_2-\mbfv_1\|=\|\mbfv_2-\mbfv_1\|+r_3\;.
	\end{align*}
	Now,
	\begin{align*}
		\|\mbfv_2-\mbfv_1\|&<\|\mbfv_2-\mbfp_1\|+\|\mbfp_1-\mbfv_1\|=r_1+\|\mbfv_2-\mbfp_1\|\\
		&<r_2+\|\mbfv_2-\mbfp_1\|<r_3+\|\mbfv_2-\mbfp_1\|\;,
	\end{align*}
	so 
	\begin{align*}
		\|\mbfv_2-\mbfv_1\|-r_3<s_1<s_2\;.
	\end{align*}
	Moreover,
	\begin{align*}
		s_2=~&\|\mbfv_2-p_2\|\\
		<~&\|\mbfv_2-\mbfv_1\|+\|\mbfv_1-\mbfp_2\|=\|\mbfv_2-\mbfv_1\|+r_2
		<~\|\mbfv_2-\mbfv_1\|+r_3\;,
	\end{align*} 
	so altogether, 
	\begin{align*}
		\|\mbfv_2-\mbfv_1\|-r_3<\mbfs_1<\mbfs_2<\|\mbfv_2-\mbfv_1\|+r_3\;.
	\end{align*}
	Thus, by continuity of $\|\mbfv_2-\mbfs\|$ in $\mbfs$, 
	there must exist $\mbfs$ such that $s_1<\|\mbfv_2-\mbfs\|<s_2$, 
	where $\mbfs\in \partial B(\mbfv_1,r_3).$ 
	Hence, there exists $\mbfs\in S$ as constructed. 
\end{proof}

\begin{figure}[t]
	\centering
\begin{tikzpicture}[scale=0.75]
	\newcommand{\ellipseonecolor}{MidnightBlue}
	\newcommand{\ellipsetwocolor}{OliveGreen}
	\newcommand{\annuluscolor}{Maroon}
	\def\n{100} 
	\draw[-stealth,thick] (-5, 0) -- (6, 0) node[right] {$x$};
	\draw[-stealth,thick] (0, -5) -- (0, 6) node[above] {$y$};
	\fill[\ellipseonecolor] (4,0) circle (2pt) node[above right] {$\mbfv_1$};
	\fill[\ellipsetwocolor] (0,4) circle (2pt) node[above left] {$\mbfv_2$};
	\draw[\ellipseonecolor, thick,shift={(4,0)},rotate=45] plot[domain=0:2*pi, samples=\n, smooth] ({2*cos(\x r))}, {sin(\x r)});
	\draw[\ellipsetwocolor, thick,shift={(0,4)},rotate=45] plot[domain=0:2*pi, samples=\n, smooth] ({2*cos(\x r))}, {sin(\x r)});

	\def\a{2}       
	\def\b{1}       
	\def\theta{45}  
	\def\cx{4}      
	\def\cy{0}      
	\def\t{195}      
	\pgfmathsetmacro{\xlocal}{\a * cos(\t)}
	\pgfmathsetmacro{\ylocal}{\b * sin(\t)}
	\pgfmathsetmacro{\xtemp}{\xlocal*cos(\theta) - \ylocal*sin(\theta)}
	\pgfmathsetmacro{\ytemp}{\xlocal*sin(\theta) + \ylocal*cos(\theta)}
	\pgfmathsetmacro{\xfinal}{\cx + \xtemp}
	\pgfmathsetmacro{\yfinal}{\cy + \ytemp}
	\pgfmathsetmacro{\xfinalone}{\xfinal}
	\pgfmathsetmacro{\yfinalone}{\yfinal}
	
	\draw[\ellipseonecolor!80, thick,shift={(4,0)},rotate=45] plot[domain=0:2*pi, samples=\n, smooth] ({2.5*cos(\x r))}, {1.25*sin(\x r)});
	\def\s{37.5}
	\def\c{2.5}
	\def\d{1.25}
	\pgfmathsetmacro{\xlocal}{\c * cos(\s)}
	\pgfmathsetmacro{\ylocal}{\d * sin(\s)}
	\pgfmathsetmacro{\xtemp}{\xlocal*cos(\theta) - \ylocal*sin(\theta)}
	\pgfmathsetmacro{\ytemp}{\xlocal*sin(\theta) + \ylocal*cos(\theta)}
	\pgfmathsetmacro{\xfinal}{\cx + \xtemp}
	\pgfmathsetmacro{\yfinal}{\cy + \ytemp}
	
	\draw[\annuluscolor,thick,shift={(0,4)},rotate=45] plot[domain=5*pi/4:13*pi/8, samples=\n, smooth] ({10*cos(\x r))}, {5*sin(\x r)});
	\draw[\annuluscolor,thick,shift={(0,4)},rotate=45] plot[domain=10.25*pi/8:12.75*pi/8, samples=\n, smooth] ({12*cos(\x r))}, {6*sin(\x r)});
	
	\fill[black] (\xfinalone,\yfinalone) circle (2pt) node[below right] {$\mbfp_1$};
	\fill[black] (\xfinal,\yfinal) circle (2pt) node[below right] {$\mbfp_2$};
	\fill[BurntOrange] (\xfinalone-1,\yfinalone) circle (2pt) node[below left] {$\mbfp_3$};
\end{tikzpicture}
	\caption{Geometric intuition for \Cref{lemma3}. 
	Let the unit ball of our norm be given by a symmetric convex body like the green ellipse centered around $\mbfv_2$.
	Given points $\mbfp_1,\mbfp_2,\mbfp_3$ in $\RR^2$, 
	with $\mbfp_1,\mbfp_2$ already placed in the plane with $\mbfv_1$ and $\mbfv_2$, 
	we seek to place $\mbfp_3$ such that $\mbfv_1$ ranks $\mbfp_3$ last but $\mbfv_2$ ranks $\mbfp_3$ between $\mbfp_1$ and $\mbfp_2$. 
	Here, the red arcs represent the balls around $\mbfv_2$ that 
	touch $\mbfp_1$ and $\mbfp_2$, creating the relevant annulus around $\mbfv_2$. 
	To ensure that $\mbfp_3$ lies between $\mbfp_1$ and $\mbfp_2$ in the ranking of $\mbfv_2$, 
	we need to place the point inside this annulus.
	However, we also need to guarantee that $\mbfp_3$ is ranked last for $\mbfv_1$. 
	As demonstrated in the drawing, 
	we can achieve this by placing $\mbfp_3$ outside the blue balls, yet between the red arcs.  
	\Cref{lemma3} ensures that such a point always exists for any norm. 
	}\label{fig:annulus}
	\Description{TODO} 
\end{figure}

In our proof of \Cref{thm2}, 
we use induction to successively place each point in the ranking of $v_1$ in the area that preserves the required relationship to~$v_2$. 

\twovoters
\begin{proof}
	We proceed by induction on $m$. 
	Our induction will be on the hypothesis that for any preference profile $\mcP$ with $m>1$ alternates (the case $m=1$ is trivial) and two distinct voters $v_1$ and $v_2$ placed at $\mbfv_1, \mbfv_2\in \RR^2$, 
	there exist coordinates $\mbfa_1,\ldots,\mbfa_m$ such that the placement of $a_1,\ldots,a_m$ preserves the preference ordering~while 
	\begin{align}\label{eq:ball-placement}
		\mbfv_2\in \left(\overline{B(\mbfv_1,\max\{\|\mbfv_1-\mbfa_i\|\mid 1\leq i \leq m\})}\right)^c\;.
	\end{align}
	While the constraint expressed in \Cref{eq:ball-placement} is not essential, it makes the geometry much easier to work with, 
	which will facilitate the inductive step.
	
	\emph{Base case.}\\
	Suppose that $\mbfv_1$ and $\mbfv_2$ are any two distinct points in $\RR^2$. 
	If $m=2$, then there exist only two possible preference profiles up to symmetry: 
	(1)~$a_1 \succ_1 a_2$ and $a_1\succ_2 a_2$, or (2)~$a_1\succ_1 a_2$ and $a_2\succ_2 a_1$ (these are the only two classes of preference profiles up to relabeling). 
	In the first case, 
	we can place $\mbfa_1$ anywhere such that $\mbfv_2\in \left(\overline{B(\mbfv_1,\|\mbfv_1-\mbfa_1\|)}\right)^c$, 
	and then place $\mbfa_2$ anywhere further away from both $\mbfv_1$ and $\mbfv_2$ inside $\RR^2$ (specifically in $\left(\overline{B(\mbfv_1,\|\mbfv_1-\mbfa_1\|)}\right)^c \cap$ $\left(\overline{B(\mbfv_2,\|\mbfv_2-\mbfa_1\|)}\right)^c\neq\emptyset$). 
	In the second case, 
	we can place $\mbfa_1$ arbitrarily close to $\mbfv_1$ on the line from $\mbfv_1$ to $\mbfv_2$ and $\mbfa_2$ arbitrarily close to $\mbfv_1$. 
	In either case, we also have 
	\begin{align*}
		\mbfv_2\in \left(\overline{B(\mbfv_1,\max\{\|\mbfv_1-\mbfa_i\||1\leq i \leq 2\})}\right)^c\;.
	\end{align*}
	
	\emph{Inductive step.}\\ 
	Now assume that our induction hypothesis holds for all $k<m$. 
	Our goal is to show that it also holds for $k=m$. 
	We can safely assume that $v_1$ has preference ordering $a_1\succ_1 \cdots \succ_1 a_m$, 
	as this just amounts to a permutation on the labels of the candidates. 
	Now, the assumption that $\mbfv_2\in \left(\overline{B(\mbfv_1,\max\{\|\mbfv_1-\mbfa_i\|\mid 1\leq i \leq m\})}\right)^c$ 
	becomes simply that $\mbfv_2\in \left(\overline{B(\mbfv_1,\|\mbfv_1-\mbfa_k\|)}\right)^c.$
	
	By the inductive hypothesis, $\mbfa_1,\ldots,\mbfa_{m-1}$ are placed in $\RR^2$ such that 
	$	\|\mbfv_1-\mbfa_1\|<\cdots <\|\mbfv_1-\mbfa_{m-1}\|$, 
	and such that the ordering with respect to the preferences of $\mbfv_2$ is preserved. 
	
	To place $\mbfa_m$, we now distinguish two cases.
	 
	First, suppose that $a_m$ does \emph{not} rank last for $v_2$. 
	That is, there exist $i,j$ such that $a_i\succ_2a_m\succ_2a_j$. 
	Since $\mbfv_2\in \left(\overline{B(\mbfv_1,\mbfa_{m-1})}\right)^c$ and $\|\mbfv_1-\mbfa_{\min\{i,j\}}\|<\|\mbfv_1-\mbfa_{\max\{i,j\}}\|$, 
	\Cref{lemma3} can be applied with $\mbfp_1=\mbfa_{\min\{i,j\}}$, 
	$\mbfp_2=\mbfa_{\max\{i,j\}}$, 
	$r_1=\|\mbfv_1-\mbfa_{\min\{i,j\}}\|$, 
	and $r_2=\|\mbfv_1-\mbfa_{\max\{i,j\}}\|$ 
	(we use the $\min$/$\max$ notation here because the ordering of $i$ and $j$ depends on whether $i<j$ for $v_1$ but not for $v_2$). 
	By \Cref{lemma3}, there thus exists an $\mbfa_m$ such that
	\begin{align*}
		\|\mbfv_1-\mbfa_{\min\{i,j\}}\|<\|\mbfv_1-\mbfa_{\max\{i,j\}}\|<\|\mbfv_1-\mbfa_m\|\,
	\end{align*} 
	but also
	\begin{align*}
		\|\mbfv_2-\mbfa_i\|<\|\mbfv_2-\mbfa_m\|<\|\mbfv_2-\mbfa_j\|\;,
	\end{align*}
	where we still have that $\mbfv_2\in \left(\overline{B(\mbfv_1,\|\mbfv_1-\mbfa_m\|)}\right)^c$.
	
	Now suppose that $a_m$ ranks last for both $v_1$ \emph{and} $v_2$. 
	In this case, we can place $a_m$ at  $\mbfa_m=\mbfv_1+(\mbfv_1-\mbfv_2)(1-\varepsilon)$ 
	for some sufficiently small $\varepsilon$. 
	By construction, the distance from $\mbfa_m$ to $\mbfv_1$ and $\mbfv_2$ will be larger than that of any $\mbfa_i$, $i<m$, 
	while still preserving that $\mbfv_2\in \left(\overline{B(\mbfv_1,\|\mbfv_1-\mbfa_{m}\|)}\right)^c$.
\end{proof}

While this result is very strong for two (types of) voters in $\RR^2$, 
it will be hard to generalize to higher dimensions.  
The problem is that when $n$ increases, \Cref{lemma3} is no longer viable,  
because its construction strictly depends on having only one ordering ``free" and the other one ``fixed''. 
Once we have to contend with a third voter and are given multiple relationships of the form $a_{i}\succ_2a_m\succ_2a_j$ and $a_k\succ_3a_m\succ_3a_l$, 
it appears implausible to guarantee that the intersection of the two annuli will lie outside $B(\mbfv_1,\|\mbfv_1-\mbfa_{m-1}\|)$, making induction difficult. 
Therefore, we believe that working with general norms in higher dimensions will require novel techniques. 
\section{Discussion and Conclusion}\label{sec:discussion}

In this work, we made progress toward characterizing the general conditions under which ordinal preferences can be faithfully represented in real space, 
introducing the notion of \emph{rank embeddability} to capture this property without reference to an underlying norm. 

\textbf{Theoretical Results and Implications.}\quad
Our \emph{first result} (\Cref{thm1}) showed that any preference profile with $m$ alternatives and $n$ voters rank-embeds into $(\RR^d,\|\cdot\|_p)$ for $p \geq 1$ if $d \geq \min\{n,m-1\}$, 
generalizing the constructions introduced by \citet{bogomolnaia2007euclidean} from the Euclidean norm to all $p$-norms for $p>1$.
As a byproduct, we also 
expounded why these constructions do not generalize to the $p=1$ case,  
which \citet{chen2022manhattan} later addressed using a much more specialized approach, 
and streamlined the previously disparate treatment of the cases $p = 2$ and $p = \infty$ (see~\cite{bogomolnaia2007euclidean}~vs.~\cite{chen2022manhattan}). 

With our \emph{second result} (\Cref{thm2}), 
we established that any~preference profile with two (types of) voters rank-embeds into $(\RR^2,\|\cdot\|)$, for any norm $\|\cdot\|$. 
Our inductive proof relies on the intrinsic geometry of convex balls in $\RR^2$, 
and hence will be hard to generalize to higher dimensions. 
The result for $n = 2$, however, 
might still be useful in the context of facility-location problems---%
or in highly polarized settings, where two-type electorates might naturally arise. 

The main implication of our results is that the behavior of spatial preferences under \emph{non-standard norms} merits systematic investigation.  
This may eventually build a case for application-sensitive norm selection based on (expected) features of the input data. 
For example, \citet{thorburn2023error} show that under the $2$-norm, 
almost all preference profiles fail to rank-embed into low dimensions ($d \ll \min\{n,m\}$). 
They raise concerns that using the $2$-norm, 
e.g., in recommender systems 
may yield inaccurate numerical preference representations.  
While it remains unclear \emph{if} and \emph{when} switching to a different norm could improve accuracy,  
our results underscore that these are questions worth asking. 

\textbf{Limitations and Future Work.}\quad
Our contribution focuses on the \emph{theoretical analysis} of spatial-preference models under arbitrary norms. 
Future work could systematically explore complementary approaches in this setting, such as \emph{simulations} and \emph{empirical studies}.

On the theoretical side, 
although our work made considerable progress toward a \emph{full characterization} of rank embeddability under general norms, 
we stopped short of solving the general case. 
Therefore, we explicitly pose the following conjecture. 
\begin{conjecture}[Rank-Embeddability Conjecture]\label{conj}
	\hspace*{0.1em}\\For $d\geq \min\{n,m-1\}$, 
	any preference profile $\mcP$ with $m$ alternatives and $n$ voters can be rank-embedded into $(\RR^d,\|\cdot\|)$, 
	where $\|\cdot\|$ denotes any norm.
\end{conjecture}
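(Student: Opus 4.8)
The plan is to mirror the two-part structure of the proof of \Cref{thm1}: it suffices to prove (I) that every preference profile with $n$ voters rank-embeds into $(\RR^n,\|\cdot\|)$ for \emph{every} norm $\|\cdot\|$ on $\RR^n$, and (II) that every preference profile with $m$ alternatives rank-embeds into $(\RR^{m-1},\|\cdot\|)$ for \emph{every} norm on $\RR^{m-1}$. Granting (I) and (II), \Cref{conj} follows just as \Cref{thm1} follows from \Cref{prop:prop1} and \Cref{prop:prop2}: for a norm on $\RR^d$ with $d\geq\min\{n,m-1\}$, restrict it to the coordinate subspace $\RR^{\min\{n,m-1\}}$, apply (I) or (II) according to which of $n$ and $m-1$ is smaller, and observe that the resulting configuration lies in a subspace on which the norm agrees with its restriction, so it is rank-preserving in $(\RR^d,\|\cdot\|)$ as well. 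Both parts rely on one geometric fact: for any norm on $\RR^k$, the set of its gradients at points of differentiability is not contained in any affine hyperplane --- if every such gradient lay on $\{w:\langle z,w\rangle=c\}$, then $\langle z,\nabla\|\cdot\|(x)\rangle=c$ almost everywhere would force $\|x+sz\|=\|x\|+cs$ for all $x$ and $s$, which is absurd for a norm (take $x=0$). Combined with the fact that a norm is differentiable almost everywhere, this lets us select differentiability points whose gradients are in whatever general position we need.

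For (I), I would use a ``distant-voter'' variant of the AR embedding (\Cref{def:alternative-rank-embedding}). Pick nonzero vectors $\mbfv^{(1)},\dots,\mbfv^{(n)}$ at differentiability points of $\|\cdot\|$ whose gradients $w_i:=\nabla\|\cdot\|(\mbfv^{(i)})$ are linearly independent, place voter $i$ at $t\mbfv^{(i)}$, and use the first-order expansion
\begin{align*}
\|t\mbfv^{(i)}-\mbfa\|=t\|\mbfv^{(i)}\|-\langle w_i,\mbfa\rangle+o(1)\qquad(t\to\infty)
\end{align*}
to place alternative $j$ at the unique $\mbfa_j\in\RR^n$ solving $\langle w_i,\mbfa_j\rangle=-\rk_i j$ for all $i\in[n]$. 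For $t$ large enough that every $o(1)$ term drops below $\tfrac12$ --- a single $t$ works, since there are finitely many voter--alternative pairs --- the distance $\|t\mbfv^{(i)}-\mbfa_j\|$ is strictly increasing in $\rk_i j$, so the embedding is rank-preserving. This is the general-norm analogue of \Cref{prop:prop1}.

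For (II), I would use a ``distant-alternative'' variant of the median-based embedding of \Cref{prop:prop2}. Pick unit vectors $\mathbf{u}_1,\dots,\mathbf{u}_m$ at differentiability points whose gradients $g_j:=\nabla\|\cdot\|(\mathbf{u}_j)$ are \emph{affinely} independent in $\RR^{m-1}$, and put $\mbfa_j=\mathbf{c}+R\,\mathbf{u}_j$ for a fixed basepoint $\mathbf{c}$ and large $R$; the same expansion gives $\|\mathbf{c}+y-\mbfa_j\|=R-\langle g_j,y\rangle+o(1)$, so near $\mathbf{c}$ the order of the distances to the alternatives is governed by the linear functionals $y\mapsto\langle g_j,y\rangle$. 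Affine independence of the $g_j$ makes $(1,\dots,1)$ lie outside the image of $y\mapsto(\langle g_j,y\rangle)_{j\in[m]}$, so that image is an $(m-1)$-dimensional complement of the diagonal in $\RR^m$ and therefore meets each of the $m!$ open sorting cones $\{z\in\RR^m:z_{\sigma(1)}>\cdots>z_{\sigma(m)}\}$. For each voter, choose a point of its sorting cone on which consecutive functional values differ by more than $1$, and then take $R$ large enough: each voter sits at a position whose distances realize its preference exactly. This plays the role of \Cref{lemma2} together with \Cref{prop:prop2}, with the exact linearity of medians there replaced by asymptotic linearity here.

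I expect the main obstacle to be making the choice of differentiability directions fully rigorous when $\|\cdot\|$ is far from strictly convex --- for instance a polytope norm such as $\|\cdot\|_1$ or $\|\cdot\|_\infty$, whose gradient takes only finitely many values. One must then confirm, using that the non-differentiability set is Lebesgue-null together with the ``no affine hyperplane captures all gradients'' argument above, that enough of those finitely many gradients are in sufficiently general position to carry out the linear algebra in (I) and (II). The remaining work --- fixing a single large scale uniformly over all voters, and checking rank preservation in the ``$\iff$'' sense of \Cref{def:embeddability} using that preferences are strict and complete --- is routine. The conceptual gain over Theorems~\ref{thm1} and~\ref{thm2} is that allowing the dimension to grow with $n$ or $m$ buys enough room to \emph{linearize an arbitrary norm asymptotically} by pushing one side of the configuration to infinity; this is precisely the flexibility that the fixed-dimension argument behind \Cref{thm2} (via \Cref{lemma3}) lacks, which is why that argument is confined to two voters.
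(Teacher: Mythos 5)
You have set out to prove \Cref{conj}, which the paper explicitly poses as an \emph{open} conjecture --- there is no proof in the paper to compare against, so I can only assess your attempt on its own merits. Having done so, I cannot find a fatal flaw: your ``linearize the norm at infinity'' strategy appears to be sound, and if the details are written out carefully it would resolve the conjecture (and, as a byproduct, give alternative proofs of \Cref{thm1} and \Cref{thm2} that bypass both the arithmetic of \Cref{prop:prop1} and the annulus argument of \Cref{lemma3}). The three facts your sketch leans on all check out. (a) The first-order expansion $\|t\mbfv-\mbfa\|=t\|\mbfv-\mbfa/t\|=t\|\mbfv\|-\langle w,\mbfa\rangle+o(1)$ is valid at any point of differentiability, because Gâteaux and Fréchet differentiability coincide for convex functions on $\RR^k$. (b) Your worry about polytope norms is resolvable, and more cleanly than via the line-integral argument: the gradients of a norm at its differentiability points are exactly the exposed points of the dual unit ball $B^*$, and since $B^*$ is the closed convex hull of its exposed points (Straszewicz) and has nonempty interior, these gradients cannot lie in any affine hyperplane; hence they contain $n$ linearly independent vectors (for part I) and $m$ affinely independent vectors (for part II). For $\|\cdot\|_1$ and $\|\cdot\|_\infty$ the finitely many gradients are the vertices of the dual ball, which span as required --- indeed, for $\|\cdot\|_\infty$ your construction in part (I) with gradients $\mbfe_1,\dots,\mbfe_n$ literally reproduces the AR embedding of Case~2 of \Cref{prop:prop1}. (c) The sorting-cone argument in part (II) is airtight: each cone is invariant under translation by $(1,\dots,1)$, so any linear complement of the diagonal surjects onto the quotient and therefore meets every cone.

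What remains to turn the sketch into a proof is genuinely routine but should still be written: fixing one scale $t$ (resp.\ $R$) uniformly over the finitely many voter--alternative pairs so that every error term is below $\tfrac12$ while consecutive target values differ by at least $1$; checking that all voter and alternative coordinates are pairwise distinct (they are: distinct rank vectors give distinct $\mbfa_j$ in part I, and distinct preferences give disjoint sorting cones in part II); and verifying the biconditional of \Cref{def:embeddability}, which follows from strictness and completeness once all the realized inequalities are strict. Given that this would settle an open problem, I would urge you to write the argument out in full --- in particular the measure-theoretic justification that the directional derivative $\langle z,\nabla\|\cdot\|(x)\rangle$ integrates along almost every line, or better, replace it with the exposed-point characterization above --- but I do not see a gap in the strategy itself.
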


On the path toward full generalization, 
extending \Cref{thm1} to \emph{polynomial norms} (including weighted $p$-norms) could be a natural next step. 
Furthermore, to our knowledge, the techniques used by \citet{chen2022manhattan} to handle the case $p = 1$ do not generalize to $p$-norms for $p > 1$.
Thus, 
defining a \emph{unified construction} that works for all $p$-norms also remains an open problem.

Beyond generalizing the upper bound on the dimensionality required to ensure rank embeddability, 
the \emph{tightness of the bound} merits further investigation. 
\citet{bogomolnaia2007euclidean} show that in the presence of \emph{indifferences} between alternatives 
(i.e., $a_j \sim_i a_l$ for some voter $i$ and alternatives $j\neq l$), 
for any $d$, there always exist preference profiles with $d+1$ voters or $d+2$ alternatives that cannot be embedded into $\RR^d$ under the Euclidean norm. 
However, the counterexample they give relies on a setting in which at least one voter is required to be equidistant to multiple alternatives in the embedding space. 
This criterion makes the tightness of the bound significantly easier to address, and it does not cover, e.g., preference profiles that contain only strict inequalities or incomplete preferences. 
Hence, it would be valuable to understand whether the bound is tight also in these scenarios,  
and if the other results by \citet{bogomolnaia2007euclidean} that rely on equalities can be generalized as well.

Scrutinizing \emph{low-dimen\-sional profiles} under arbitrary norms could constitute another interesting avenue for future work. 
\citet{escoffier2024euclidean} provide some good combinatorial results for the amount of preference profiles that are Euclidean, Manhattan, or $\ell^\infty$ in low dimensions, 
while \citet{chen2022manhattan} highlight just how distinct the resulting embeddings can be. 
Against this background, 
studying which preference profiles rank-embed under \emph{multiple norms} in a specific dimension might yield interesting insights. 
Given the generality of our result for $\RR^2$ (\cref{thm2}), 
some of our techniques might prove useful for tackling this question.

Several other approaches also hold promise for approaching the general case (\Cref{conj}). 
For instance, one could consider \emph{approximations of balls} via polynomial norms. 
There are really robust uniformity results for any convex symmetric ball via polynomials---%
see, e.g., the works by \citet{goodfellow2018machine} or \citet{totik2013approximation}---,
and we might be able to leverage those results for the higher-dimensional setting. 
While the corresponding constructions would not be explicit, 
any result establishing rank embeddability under arbitrary norms would provide a strong theoretical foundation for future work on problems related to spatial preferences.

\section*{Related Version}

An Extended Abstract of this work was published at AAMAS 2026~\cite{zeitlin2026real}.

\begin{acks}
	This research was supported by the Aalto Science Institute Summer Research Program.
\end{acks}

\balance
\bibliographystyle{ACM-Reference-Format} 
\bibliography{bibliography}


\begin{thebibliography}{45}


\ifx \showCODEN    \undefined \def \showCODEN     #1{\unskip}     \fi
\ifx \showDOI      \undefined \def \showDOI       #1{#1}\fi
\ifx \showISBNx    \undefined \def \showISBNx     #1{\unskip}     \fi
\ifx \showISBNxiii \undefined \def \showISBNxiii  #1{\unskip}     \fi
\ifx \showISSN     \undefined \def \showISSN      #1{\unskip}     \fi
\ifx \showLCCN     \undefined \def \showLCCN      #1{\unskip}     \fi
\ifx \shownote     \undefined \def \shownote      #1{#1}          \fi
\ifx \showarticletitle \undefined \def \showarticletitle #1{#1}   \fi
\ifx \showURL      \undefined \def \showURL       {\relax}        \fi
\providecommand\bibfield[2]{#2}
\providecommand\bibinfo[2]{#2}
\providecommand\natexlab[1]{#1}
\providecommand\showeprint[2][]{arXiv:#2}

\bibitem[\protect\citeauthoryear{Abbasi, Banerjee, Byrka, Chalermsook, Gadekar,
  Khodamoradi, Marx, Sharma, and Spoerhase}{Abbasi et~al\mbox{.}}{2023}]%
        {abbasi2023parameterized}
\bibfield{author}{\bibinfo{person}{Fateme Abbasi}, \bibinfo{person}{Sandip
  Banerjee}, \bibinfo{person}{Jaroslaw Byrka}, \bibinfo{person}{Parinya
  Chalermsook}, \bibinfo{person}{Ameet Gadekar}, \bibinfo{person}{Kamyar
  Khodamoradi}, \bibinfo{person}{D{\'{a}}niel Marx}, \bibinfo{person}{Roohani
  Sharma}, {and} \bibinfo{person}{Joachim Spoerhase}.}
  \bibinfo{year}{2023}\natexlab{}.
\newblock \showarticletitle{Parameterized Approximation Schemes for Clustering
  with General Norm Objectives}. In \bibinfo{booktitle}{\emph{64th {IEEE}
  Annual Symposium on Foundations of Computer Science, {FOCS} 2023, Santa Cruz,
  CA, USA, November 6-9, 2023}}. \bibinfo{publisher}{{IEEE}},
  \bibinfo{pages}{1377--1399}.
\newblock
\urldef\tempurl%
\url{https://doi.org/10.1109/FOCS57990.2023.00085}
\showDOI{\tempurl}


\bibitem[\protect\citeauthoryear{Anagnostides, Fotakis, and
  Patsilinakos}{Anagnostides et~al\mbox{.}}{2022}]%
        {anagnostides2022metric}
\bibfield{author}{\bibinfo{person}{Ioannis Anagnostides},
  \bibinfo{person}{Dimitris Fotakis}, {and} \bibinfo{person}{Panagiotis
  Patsilinakos}.} \bibinfo{year}{2022}\natexlab{}.
\newblock \showarticletitle{Metric-Distortion Bounds under Limited
  Information}.
\newblock \bibinfo{journal}{\emph{J. Artif. Intell. Res.}}
  \bibinfo{volume}{74} (\bibinfo{year}{2022}), \bibinfo{pages}{1449--1483}.
\newblock
\urldef\tempurl%
\url{https://doi.org/10.1613/JAIR.1.13338}
\showDOI{\tempurl}


\bibitem[\protect\citeauthoryear{Anshelevich, Bhardwaj, Elkind, Postl, and
  Skowron}{Anshelevich et~al\mbox{.}}{2018}]%
        {anshelevich2018approximating}
\bibfield{author}{\bibinfo{person}{Elliot Anshelevich}, \bibinfo{person}{Onkar
  Bhardwaj}, \bibinfo{person}{Edith Elkind}, \bibinfo{person}{John Postl},
  {and} \bibinfo{person}{Piotr Skowron}.} \bibinfo{year}{2018}\natexlab{}.
\newblock \showarticletitle{Approximating Optimal Social Choice under Metric
  Preferences}.
\newblock \bibinfo{journal}{\emph{Artif. Intell.}}  \bibinfo{volume}{264}
  (\bibinfo{year}{2018}), \bibinfo{pages}{27--51}.
\newblock
\urldef\tempurl%
\url{https://doi.org/10.1016/J.ARTINT.2018.07.006}
\showDOI{\tempurl}


\bibitem[\protect\citeauthoryear{Bogomolnaia and Laslier}{Bogomolnaia and
  Laslier}{2007}]%
        {bogomolnaia2007euclidean}
\bibfield{author}{\bibinfo{person}{Anna Bogomolnaia} {and}
  \bibinfo{person}{Jean-Fran{\c{c}}ois Laslier}.}
  \bibinfo{year}{2007}\natexlab{}.
\newblock \showarticletitle{Euclidean Preferences}.
\newblock \bibinfo{journal}{\emph{Journal of Mathematical Economics}}
  \bibinfo{volume}{43}, \bibinfo{number}{2} (\bibinfo{year}{2007}),
  \bibinfo{pages}{87--98}.
\newblock
\urldef\tempurl%
\url{https://doi.org/10.1016/j.jmateco.2006.09.004}
\showDOI{\tempurl}


\bibitem[\protect\citeauthoryear{Caragiannis, Shah, and Voudouris}{Caragiannis
  et~al\mbox{.}}{2022}]%
        {caragiannis2022metric}
\bibfield{author}{\bibinfo{person}{Ioannis Caragiannis},
  \bibinfo{person}{Nisarg Shah}, {and} \bibinfo{person}{Alexandros~A.
  Voudouris}.} \bibinfo{year}{2022}\natexlab{}.
\newblock \showarticletitle{The Metric Distortion of Multiwinner Voting}.
\newblock \bibinfo{journal}{\emph{Artif. Intell.}}  \bibinfo{volume}{313}
  (\bibinfo{year}{2022}), \bibinfo{pages}{103802}.
\newblock
\urldef\tempurl%
\url{https://doi.org/10.1016/J.ARTINT.2022.103802}
\showDOI{\tempurl}


\bibitem[\protect\citeauthoryear{Carroll, Lewis, Lo, Poole, and
  Rosenthal}{Carroll et~al\mbox{.}}{2013}]%
        {carroll2013structure}
\bibfield{author}{\bibinfo{person}{Royce Carroll}, \bibinfo{person}{Jeffrey~B
  Lewis}, \bibinfo{person}{James Lo}, \bibinfo{person}{Keith~T Poole}, {and}
  \bibinfo{person}{Howard Rosenthal}.} \bibinfo{year}{2013}\natexlab{}.
\newblock \showarticletitle{The Structure of Utility in Spatial Models of
  Voting}.
\newblock \bibinfo{journal}{\emph{American Journal of Political Science}}
  \bibinfo{volume}{57}, \bibinfo{number}{4} (\bibinfo{year}{2013}),
  \bibinfo{pages}{1008--1028}.
\newblock
\urldef\tempurl%
\url{https://doi.org/10.1111/ajps.12029}
\showDOI{\tempurl}


\bibitem[\protect\citeauthoryear{Chambers and Echenique}{Chambers and
  Echenique}{2020}]%
        {chambers2020spherical}
\bibfield{author}{\bibinfo{person}{Christopher~P Chambers} {and}
  \bibinfo{person}{Federico Echenique}.} \bibinfo{year}{2020}\natexlab{}.
\newblock \showarticletitle{Spherical Preferences}.
\newblock \bibinfo{journal}{\emph{Journal of Economic Theory}}
  \bibinfo{volume}{189} (\bibinfo{year}{2020}), \bibinfo{pages}{105086}.
\newblock
\urldef\tempurl%
\url{https://doi.org/10.1016/j.jet.2020.105086}
\showDOI{\tempurl}


\bibitem[\protect\citeauthoryear{Chan, Filos{-}Ratsikas, Li, Li, and Wang}{Chan
  et~al\mbox{.}}{2021}]%
        {chan2021mechanism}
\bibfield{author}{\bibinfo{person}{Hau Chan}, \bibinfo{person}{Aris
  Filos{-}Ratsikas}, \bibinfo{person}{Bo Li}, \bibinfo{person}{Minming Li},
  {and} \bibinfo{person}{Chenhao Wang}.} \bibinfo{year}{2021}\natexlab{}.
\newblock \showarticletitle{Mechanism Design for Facility Location Problems:
  {A} Survey}. In \bibinfo{booktitle}{\emph{Proceedings of the Thirtieth
  International Joint Conference on Artificial Intelligence, {IJCAI} 2021,
  Virtual Event / Montreal, Canada, 19-27 August 2021}}.
  \bibinfo{publisher}{ijcai.org}, \bibinfo{pages}{4356--4365}.
\newblock
\urldef\tempurl%
\url{https://doi.org/10.24963/IJCAI.2021/596}
\showDOI{\tempurl}


\bibitem[\protect\citeauthoryear{Chen, N{\"{o}}llenburg, Simola, Villedieu, and
  Wallinger}{Chen et~al\mbox{.}}{2022}]%
        {chen2022manhattan}
\bibfield{author}{\bibinfo{person}{Jiehua Chen}, \bibinfo{person}{Martin
  N{\"{o}}llenburg}, \bibinfo{person}{Sofia Simola},
  \bibinfo{person}{Ana{\"{\i}}s Villedieu}, {and} \bibinfo{person}{Markus
  Wallinger}.} \bibinfo{year}{2022}\natexlab{}.
\newblock \showarticletitle{Multidimensional Manhattan Preferences}. In
  \bibinfo{booktitle}{\emph{{LATIN} 2022: Theoretical Informatics - 15th Latin
  American Symposium, Guanajuato, Mexico, November 7-11, 2022, Proceedings}}
  \emph{(\bibinfo{series}{Lecture Notes in Computer Science},
  Vol.~\bibinfo{volume}{13568})}. \bibinfo{publisher}{Springer},
  \bibinfo{pages}{273--289}.
\newblock
\urldef\tempurl%
\url{https://doi.org/10.1007/978-3-031-20624-5_17}
\showDOI{\tempurl}


\bibitem[\protect\citeauthoryear{Clinton, Jackman, and Rivers}{Clinton
  et~al\mbox{.}}{2004}]%
        {clinton2004statistical}
\bibfield{author}{\bibinfo{person}{Joshua Clinton}, \bibinfo{person}{Simon
  Jackman}, {and} \bibinfo{person}{Douglas Rivers}.}
  \bibinfo{year}{2004}\natexlab{}.
\newblock \showarticletitle{The Statistical Analysis of Roll Call Data}.
\newblock \bibinfo{journal}{\emph{American Political Science Review}}
  \bibinfo{volume}{98}, \bibinfo{number}{2} (\bibinfo{year}{2004}),
  \bibinfo{pages}{355--370}.
\newblock
\urldef\tempurl%
\url{https://doi.org/10.1017/S0003055404001194}
\showDOI{\tempurl}


\bibitem[\protect\citeauthoryear{Cohen{-}Addad, Grandoni, Lee, and
  Schwiegelshohn}{Cohen{-}Addad et~al\mbox{.}}{2023}]%
        {cohen2023breaching}
\bibfield{author}{\bibinfo{person}{Vincent Cohen{-}Addad},
  \bibinfo{person}{Fabrizio Grandoni}, \bibinfo{person}{Euiwoong Lee}, {and}
  \bibinfo{person}{Chris Schwiegelshohn}.} \bibinfo{year}{2023}\natexlab{}.
\newblock \showarticletitle{Breaching the 2 {LMP} Approximation Barrier for
  Facility Location with Applications to \emph{k}-Median}. In
  \bibinfo{booktitle}{\emph{Proceedings of the 2023 {ACM-SIAM} Symposium on
  Discrete Algorithms, {SODA} 2023, Florence, Italy, January 22-25, 2023}}.
  \bibinfo{publisher}{{SIAM}}, \bibinfo{pages}{940--986}.
\newblock
\urldef\tempurl%
\url{https://doi.org/10.1137/1.9781611977554.CH37}
\showDOI{\tempurl}


\bibitem[\protect\citeauthoryear{Cohen{-}Addad and {Karthik {C.
  S.}}}{Cohen{-}Addad and {Karthik {C. S.}}}{2019}]%
        {cohen2019inapproximability}
\bibfield{author}{\bibinfo{person}{Vincent Cohen{-}Addad} {and}
  \bibinfo{person}{{Karthik {C. S.}}}} \bibinfo{year}{2019}\natexlab{}.
\newblock \showarticletitle{Inapproximability of Clustering in Lp Metrics}. In
  \bibinfo{booktitle}{\emph{60th {IEEE} Annual Symposium on Foundations of
  Computer Science, {FOCS} 2019, Baltimore, Maryland, USA, November 9-12,
  2019}}. \bibinfo{publisher}{{IEEE} Computer Society},
  \bibinfo{pages}{519--539}.
\newblock
\urldef\tempurl%
\url{https://doi.org/10.1109/FOCS.2019.00040}
\showDOI{\tempurl}


\bibitem[\protect\citeauthoryear{Cohen{-}Addad, {Karthik {C. S.}}, and
  Lee}{Cohen{-}Addad et~al\mbox{.}}{2022}]%
        {cohen2022johnson}
\bibfield{author}{\bibinfo{person}{Vincent Cohen{-}Addad},
  \bibinfo{person}{{Karthik {C. S.}}}, {and} \bibinfo{person}{Euiwoong Lee}.}
  \bibinfo{year}{2022}\natexlab{}.
\newblock \showarticletitle{Johnson Coverage Hypothesis: Inapproximability of
  \emph{k}-Means and \emph{k}-Median in \emph{p}-Metrics}. In
  \bibinfo{booktitle}{\emph{Proceedings of the 2022 {ACM-SIAM} Symposium on
  Discrete Algorithms, {SODA} 2022, Virtual Conference / Alexandria, VA, USA,
  January 9-12, 2022}}. \bibinfo{publisher}{{SIAM}},
  \bibinfo{pages}{1493--1530}.
\newblock
\urldef\tempurl%
\url{https://doi.org/10.1137/1.9781611977073.63}
\showDOI{\tempurl}


\bibitem[\protect\citeauthoryear{Ebadian, Halpern, and Micha}{Ebadian
  et~al\mbox{.}}{2024}]%
        {ebadian2024metric}
\bibfield{author}{\bibinfo{person}{Soroush Ebadian}, \bibinfo{person}{Daniel
  Halpern}, {and} \bibinfo{person}{Evi Micha}.}
  \bibinfo{year}{2024}\natexlab{}.
\newblock \showarticletitle{Metric Distortion with Elicited Pairwise
  Comparisons}. In \bibinfo{booktitle}{\emph{Proceedings of the Thirty-Third
  International Joint Conference on Artificial Intelligence, {IJCAI} 2024,
  Jeju, South Korea, August 3-9, 2024}}. \bibinfo{publisher}{ijcai.org},
  \bibinfo{pages}{2791--2798}.
\newblock
\urldef\tempurl%
\url{https://doi.org/10.24963/ijcai.2024/309}
\showDOI{\tempurl}


\bibitem[\protect\citeauthoryear{Eckert and Klamler}{Eckert and
  Klamler}{2010}]%
        {eckert2010equity}
\bibfield{author}{\bibinfo{person}{Daniel Eckert} {and}
  \bibinfo{person}{Christian Klamler}.} \bibinfo{year}{2010}\natexlab{}.
\newblock \showarticletitle{An Equity-Efficiency Trade-Off in a Geometric
  Approach to Committee Selection}.
\newblock \bibinfo{journal}{\emph{European Journal of Political Economy}}
  \bibinfo{volume}{26}, \bibinfo{number}{3} (\bibinfo{year}{2010}),
  \bibinfo{pages}{386--391}.
\newblock
\urldef\tempurl%
\url{https://doi.org/10.1016/j.ejpoleco.2009.11.009}
\showDOI{\tempurl}


\bibitem[\protect\citeauthoryear{Eguia}{Eguia}{2011}]%
        {eguia2011foundations}
\bibfield{author}{\bibinfo{person}{Jon~X Eguia}.}
  \bibinfo{year}{2011}\natexlab{}.
\newblock \showarticletitle{Foundations of Spatial Preferences}.
\newblock \bibinfo{journal}{\emph{Journal of Mathematical Economics}}
  \bibinfo{volume}{47}, \bibinfo{number}{2} (\bibinfo{year}{2011}),
  \bibinfo{pages}{200--205}.
\newblock
\urldef\tempurl%
\url{https://doi.org/10.1016/j.jmateco.2010.12.014}
\showDOI{\tempurl}


\bibitem[\protect\citeauthoryear{Eguia}{Eguia}{2013}]%
        {eguia2013spatial}
\bibfield{author}{\bibinfo{person}{Jon~X Eguia}.}
  \bibinfo{year}{2013}\natexlab{}.
\newblock \showarticletitle{On the Spatial Representation of Preference
  Profiles}.
\newblock \bibinfo{journal}{\emph{Economic Theory}} \bibinfo{volume}{52},
  \bibinfo{number}{1} (\bibinfo{year}{2013}), \bibinfo{pages}{103--128}.
\newblock
\urldef\tempurl%
\url{https://doi.org/10.1007/s00199-011-0669-8}
\showDOI{\tempurl}


\bibitem[\protect\citeauthoryear{Eguia et~al\mbox{.}}{Eguia
  et~al\mbox{.}}{2009}]%
        {eguia2009utility}
\bibfield{author}{\bibinfo{person}{Jon~X Eguia} {et~al\mbox{.}}}
  \bibinfo{year}{2009}\natexlab{}.
\newblock \showarticletitle{Utility Representations of Risk Neutral Preferences
  in Multiple Dimensions}.
\newblock \bibinfo{journal}{\emph{Quarterly Journal of Political Science}}
  \bibinfo{volume}{4}, \bibinfo{number}{4} (\bibinfo{year}{2009}),
  \bibinfo{pages}{379--385}.
\newblock
\urldef\tempurl%
\url{https://doi.org/10.1561/100.00009059}
\showDOI{\tempurl}


\bibitem[\protect\citeauthoryear{Enelow and Hinich}{Enelow and Hinich}{1984}]%
        {enelow1984spatial}
\bibfield{author}{\bibinfo{person}{James~M Enelow} {and}
  \bibinfo{person}{Melvin~J Hinich}.} \bibinfo{year}{1984}\natexlab{}.
\newblock \bibinfo{booktitle}{\emph{The Spatial Theory of Voting: An
  Introduction}}.
\newblock \bibinfo{publisher}{Cambridge University Press}.
\newblock


\bibitem[\protect\citeauthoryear{Escoffier, Spanjaard, and
  Tydrichov{\'a}}{Escoffier et~al\mbox{.}}{2023}]%
        {escoffier2023algorithmic}
\bibfield{author}{\bibinfo{person}{Bruno Escoffier}, \bibinfo{person}{Olivier
  Spanjaard}, {and} \bibinfo{person}{Magdal{\'e}na Tydrichov{\'a}}.}
  \bibinfo{year}{2023}\natexlab{}.
\newblock \showarticletitle{Algorithmic Recognition of 2-Euclidean
  Preferences}.
\newblock In \bibinfo{booktitle}{\emph{ECAI 2023}}. \bibinfo{publisher}{IOS
  Press}, \bibinfo{pages}{637--644}.
\newblock
\urldef\tempurl%
\url{https://doi.org/10.3233/FAIA230326}
\showDOI{\tempurl}


\bibitem[\protect\citeauthoryear{Escoffier, Spanjaard, and
  Tydrichov{\'{a}}}{Escoffier et~al\mbox{.}}{2024}]%
        {escoffier2024euclidean}
\bibfield{author}{\bibinfo{person}{Bruno Escoffier}, \bibinfo{person}{Olivier
  Spanjaard}, {and} \bibinfo{person}{Magdal{\'{e}}na Tydrichov{\'{a}}}.}
  \bibinfo{year}{2024}\natexlab{}.
\newblock \showarticletitle{Euclidean Preferences in the Plane under
  {$\ell_1$}, {$\ell_2$} and {$\ell_{\infty}$} Norms}.
\newblock \bibinfo{journal}{\emph{Soc. Choice Welf.}} \bibinfo{volume}{63},
  \bibinfo{number}{1} (\bibinfo{year}{2024}), \bibinfo{pages}{125--169}.
\newblock
\urldef\tempurl%
\url{https://doi.org/10.1007/S00355-024-01525-2}
\showDOI{\tempurl}


\bibitem[\protect\citeauthoryear{Feigenbaum, Sethuraman, and Ye}{Feigenbaum
  et~al\mbox{.}}{2017}]%
        {feigenbaum2017approximately}
\bibfield{author}{\bibinfo{person}{Itai Feigenbaum}, \bibinfo{person}{Jay
  Sethuraman}, {and} \bibinfo{person}{Chun Ye}.}
  \bibinfo{year}{2017}\natexlab{}.
\newblock \showarticletitle{Approximately Optimal Mechanisms for Strategyproof
  Facility Location: Minimizing lp Norm of Costs}.
\newblock \bibinfo{journal}{\emph{Mathematics of Operations Research}}
  \bibinfo{volume}{42}, \bibinfo{number}{2} (\bibinfo{year}{2017}),
  \bibinfo{pages}{434--447}.
\newblock
\urldef\tempurl%
\url{https://doi.org/10.1287/moor.2016.0810}
\showDOI{\tempurl}


\bibitem[\protect\citeauthoryear{Goodfellow, McDaniel, and Papernot}{Goodfellow
  et~al\mbox{.}}{2018}]%
        {goodfellow2018machine}
\bibfield{author}{\bibinfo{person}{Ian~J. Goodfellow},
  \bibinfo{person}{Patrick~D. McDaniel}, {and} \bibinfo{person}{Nicolas
  Papernot}.} \bibinfo{year}{2018}\natexlab{}.
\newblock \showarticletitle{Making Machine Learning Robust against Adversarial
  Inputs}.
\newblock \bibinfo{journal}{\emph{Commun. {ACM}}} \bibinfo{volume}{61},
  \bibinfo{number}{7} (\bibinfo{year}{2018}), \bibinfo{pages}{56--66}.
\newblock
\urldef\tempurl%
\url{https://doi.org/10.1145/3134599}
\showDOI{\tempurl}


\bibitem[\protect\citeauthoryear{Henry and Mourifi{\'e}}{Henry and
  Mourifi{\'e}}{2013}]%
        {henry2013euclidean}
\bibfield{author}{\bibinfo{person}{Marc Henry} {and} \bibinfo{person}{Ismael
  Mourifi{\'e}}.} \bibinfo{year}{2013}\natexlab{}.
\newblock \showarticletitle{Euclidean Revealed Preferences: Testing the Spatial
  Voting Model}.
\newblock \bibinfo{journal}{\emph{Journal of Applied Econometrics}}
  \bibinfo{volume}{28}, \bibinfo{number}{4} (\bibinfo{year}{2013}),
  \bibinfo{pages}{650--666}.
\newblock
\urldef\tempurl%
\url{https://doi.org/10.1002/jae.1276}
\showDOI{\tempurl}


\bibitem[\protect\citeauthoryear{Humphreys and Laver}{Humphreys and
  Laver}{2010}]%
        {humphreys2010spatial}
\bibfield{author}{\bibinfo{person}{Macartan Humphreys} {and}
  \bibinfo{person}{Michael Laver}.} \bibinfo{year}{2010}\natexlab{}.
\newblock \showarticletitle{Spatial Models, Cognitive Metrics, and Majority
  Rule Equilibria}.
\newblock \bibinfo{journal}{\emph{British Journal of Political Science}}
  \bibinfo{volume}{40}, \bibinfo{number}{1} (\bibinfo{year}{2010}),
  \bibinfo{pages}{11--30}.
\newblock
\urldef\tempurl%
\url{https://doi.org/10.1017/S0007123409990263}
\showDOI{\tempurl}


\bibitem[\protect\citeauthoryear{Kalandrakis}{Kalandrakis}{2010}]%
        {kalandrakis2010rationalizable}
\bibfield{author}{\bibinfo{person}{Tasos Kalandrakis}.}
  \bibinfo{year}{2010}\natexlab{}.
\newblock \showarticletitle{Rationalizable Voting}.
\newblock \bibinfo{journal}{\emph{Theoretical Economics}} \bibinfo{volume}{5},
  \bibinfo{number}{1} (\bibinfo{year}{2010}), \bibinfo{pages}{93--125}.
\newblock
\urldef\tempurl%
\url{https://doi.org/10.3982/TE425}
\showDOI{\tempurl}


\bibitem[\protect\citeauthoryear{Kim, Londregan, and Ratkovic}{Kim
  et~al\mbox{.}}{2018}]%
        {kim2018estimating}
\bibfield{author}{\bibinfo{person}{In~Song Kim}, \bibinfo{person}{John
  Londregan}, {and} \bibinfo{person}{Marc Ratkovic}.}
  \bibinfo{year}{2018}\natexlab{}.
\newblock \showarticletitle{Estimating Spatial Preferences from Votes and
  Text}.
\newblock \bibinfo{journal}{\emph{Political Analysis}} \bibinfo{volume}{26},
  \bibinfo{number}{2} (\bibinfo{year}{2018}), \bibinfo{pages}{210--229}.
\newblock
\urldef\tempurl%
\url{https://doi.org/10.1017/pan.2018.7}
\showDOI{\tempurl}


\bibitem[\protect\citeauthoryear{Knoblauch}{Knoblauch}{2010}]%
        {knoblauch2010recognizing}
\bibfield{author}{\bibinfo{person}{Vicki Knoblauch}.}
  \bibinfo{year}{2010}\natexlab{}.
\newblock \showarticletitle{Recognizing One-Dimensional Euclidean Preference
  Profiles}.
\newblock \bibinfo{journal}{\emph{Journal of Mathematical Economics}}
  \bibinfo{volume}{46}, \bibinfo{number}{1} (\bibinfo{year}{2010}),
  \bibinfo{pages}{1--5}.
\newblock
\urldef\tempurl%
\url{https://doi.org/10.1016/j.jmateco.2009.05.007}
\showDOI{\tempurl}


\bibitem[\protect\citeauthoryear{Ladha}{Ladha}{1991}]%
        {ladha1991spatial}
\bibfield{author}{\bibinfo{person}{Krishna~K Ladha}.}
  \bibinfo{year}{1991}\natexlab{}.
\newblock \showarticletitle{A Spatial Model of Legislative Voting with
  Perceptual Error}.
\newblock \bibinfo{journal}{\emph{Public Choice}} \bibinfo{volume}{68},
  \bibinfo{number}{1} (\bibinfo{year}{1991}), \bibinfo{pages}{151--174}.
\newblock
\urldef\tempurl%
\url{https://doi.org/10.1007/BF00173825}
\showDOI{\tempurl}


\bibitem[\protect\citeauthoryear{Larson and Sadiq}{Larson and Sadiq}{1983}]%
        {larson1983facility}
\bibfield{author}{\bibinfo{person}{Richard~C. Larson} {and}
  \bibinfo{person}{Ghazala Sadiq}.} \bibinfo{year}{1983}\natexlab{}.
\newblock \showarticletitle{Facility Locations with the Manhattan Metric in the
  Presence of Barriers to Travel}.
\newblock \bibinfo{journal}{\emph{Oper. Res.}} \bibinfo{volume}{31},
  \bibinfo{number}{4} (\bibinfo{year}{1983}), \bibinfo{pages}{652--669}.
\newblock
\urldef\tempurl%
\url{https://doi.org/10.1287/OPRE.31.4.652}
\showDOI{\tempurl}


\bibitem[\protect\citeauthoryear{Lee and Shin}{Lee and Shin}{2025}]%
        {lee2025facility}
\bibfield{author}{\bibinfo{person}{Euiwoong Lee} {and} \bibinfo{person}{Kijun
  Shin}.} \bibinfo{year}{2025}\natexlab{}.
\newblock \showarticletitle{Facility Location on High-Dimensional Euclidean
  Spaces}. In \bibinfo{booktitle}{\emph{16th Innovations in Theoretical
  Computer Science Conference, {ITCS} 2025, January 7-10, 2025, New York, NY,
  {USA}}} \emph{(\bibinfo{series}{LIPIcs}, Vol.~\bibinfo{volume}{325})}.
  \bibinfo{publisher}{Schloss Dagstuhl - Leibniz-Zentrum f{\"{u}}r Informatik},
  \bibinfo{pages}{70:1--70:21}.
\newblock
\urldef\tempurl%
\url{https://doi.org/10.4230/LIPICS.ITCS.2025.70}
\showDOI{\tempurl}


\bibitem[\protect\citeauthoryear{Luque and Sosa}{Luque and Sosa}{2025}]%
        {luque2025operationalizing}
\bibfield{author}{\bibinfo{person}{Carolina Luque} {and} \bibinfo{person}{Juan
  Sosa}.} \bibinfo{year}{2025}\natexlab{}.
\newblock \showarticletitle{Operationalizing Legislative Bodies: A
  Methodological and Empirical Perspective with a Bayesian Approach}.
\newblock \bibinfo{journal}{\emph{J. Amer. Statist. Assoc.}}
  \bibinfo{volume}{120}, \bibinfo{number}{549} (\bibinfo{year}{2025}),
  \bibinfo{pages}{572--583}.
\newblock
\urldef\tempurl%
\url{https://doi.org/10.1080/01621459.2024.2413928}
\showDOI{\tempurl}


\bibitem[\protect\citeauthoryear{Milyo}{Milyo}{2000}]%
        {milyo2000logical}
\bibfield{author}{\bibinfo{person}{Jeffrey Milyo}.}
  \bibinfo{year}{2000}\natexlab{}.
\newblock \showarticletitle{Logical Deficiencies in Spatial Models: A
  Constructive Critique}.
\newblock \bibinfo{journal}{\emph{Public Choice}} \bibinfo{volume}{105},
  \bibinfo{number}{3} (\bibinfo{year}{2000}), \bibinfo{pages}{273--289}.
\newblock
\urldef\tempurl%
\url{https://doi.org/10.1023/A:1005203611524}
\showDOI{\tempurl}


\bibitem[\protect\citeauthoryear{Pardos-Prado and Dinas}{Pardos-Prado and
  Dinas}{2010}]%
        {pardos2010systemic}
\bibfield{author}{\bibinfo{person}{Sergi Pardos-Prado} {and}
  \bibinfo{person}{Elias Dinas}.} \bibinfo{year}{2010}\natexlab{}.
\newblock \showarticletitle{Systemic Polarisation and Spatial Voting}.
\newblock \bibinfo{journal}{\emph{European Journal of Political Research}}
  \bibinfo{volume}{49}, \bibinfo{number}{6} (\bibinfo{year}{2010}),
  \bibinfo{pages}{759--786}.
\newblock
\urldef\tempurl%
\url{https://doi.org/10.1111/j.1475-6765.2010.01918.x}
\showDOI{\tempurl}


\bibitem[\protect\citeauthoryear{Peters}{Peters}{2017}]%
        {peters2017recognising}
\bibfield{author}{\bibinfo{person}{Dominik Peters}.}
  \bibinfo{year}{2017}\natexlab{}.
\newblock \showarticletitle{Recognising Multidimensional Euclidean
  Preferences}. In \bibinfo{booktitle}{\emph{Proceedings of the Thirty-First
  {AAAI} Conference on Artificial Intelligence, February 4-9, 2017, San
  Francisco, California, {USA}}}. \bibinfo{publisher}{{AAAI} Press},
  \bibinfo{pages}{642--648}.
\newblock
\urldef\tempurl%
\url{https://doi.org/10.1609/AAAI.V31I1.10616}
\showDOI{\tempurl}


\bibitem[\protect\citeauthoryear{Peters, van~der Stel, and Storcken}{Peters
  et~al\mbox{.}}{1993}]%
        {peters1993generalized}
\bibfield{author}{\bibinfo{person}{Hans Peters}, \bibinfo{person}{Hans van~der
  Stel}, {and} \bibinfo{person}{Ton Storcken}.}
  \bibinfo{year}{1993}\natexlab{}.
\newblock \showarticletitle{Generalized Median Solutions, Strategy-Proofness
  and Strictly Convex Norms}.
\newblock \bibinfo{journal}{\emph{Zeitschrift f{\"u}r Operations Research}}
  \bibinfo{volume}{38}, \bibinfo{number}{1} (\bibinfo{year}{1993}),
  \bibinfo{pages}{19--53}.
\newblock
\urldef\tempurl%
\url{https://doi.org/10.1007/BF01416005}
\showDOI{\tempurl}


\bibitem[\protect\citeauthoryear{Rudin}{Rudin}{2009}]%
        {rudin2009pnorm}
\bibfield{author}{\bibinfo{person}{Cynthia Rudin}.}
  \bibinfo{year}{2009}\natexlab{}.
\newblock \showarticletitle{The P-Norm Push: {A} Simple Convex Ranking
  Algorithm that Concentrates at the Top of the List}.
\newblock \bibinfo{journal}{\emph{J. Mach. Learn. Res.}}  \bibinfo{volume}{10}
  (\bibinfo{year}{2009}), \bibinfo{pages}{2233--2271}.
\newblock
\urldef\tempurl%
\url{https://doi.org/10.5555/1577069.1755861}
\showDOI{\tempurl}


\bibitem[\protect\citeauthoryear{Shin, Lim, and Park}{Shin
  et~al\mbox{.}}{2025}]%
        {shin2025l1}
\bibfield{author}{\bibinfo{person}{Sooahn Shin}, \bibinfo{person}{Johan Lim},
  {and} \bibinfo{person}{Jong~Hee Park}.} \bibinfo{year}{2025}\natexlab{}.
\newblock \showarticletitle{{$\ell^1$}-based Bayesian Ideal Point Model for
  Multidimensional Politics}.
\newblock \bibinfo{journal}{\emph{J. Amer. Statist. Assoc.}}
  \bibinfo{volume}{120}, \bibinfo{number}{550} (\bibinfo{year}{2025}),
  \bibinfo{pages}{631--644}.
\newblock
\urldef\tempurl%
\url{https://doi.org/10.1080/01621459.2024.2425461}
\showDOI{\tempurl}


\bibitem[\protect\citeauthoryear{Stoetzer and Zittlau}{Stoetzer and
  Zittlau}{2015}]%
        {stoetzer2015multidimensional}
\bibfield{author}{\bibinfo{person}{Lukas~F Stoetzer} {and}
  \bibinfo{person}{Steffen Zittlau}.} \bibinfo{year}{2015}\natexlab{}.
\newblock \showarticletitle{Multidimensional Spatial Voting with Non-Separable
  Preferences}.
\newblock \bibinfo{journal}{\emph{Political Analysis}} \bibinfo{volume}{23},
  \bibinfo{number}{3} (\bibinfo{year}{2015}), \bibinfo{pages}{415--428}.
\newblock
\urldef\tempurl%
\url{https://doi.org/10.1093/pan/mpv013}
\showDOI{\tempurl}


\bibitem[\protect\citeauthoryear{Stokes}{Stokes}{1963}]%
        {stokes1963spatial}
\bibfield{author}{\bibinfo{person}{Donald~E Stokes}.}
  \bibinfo{year}{1963}\natexlab{}.
\newblock \showarticletitle{Spatial Models of Party Competition}.
\newblock \bibinfo{journal}{\emph{American Political Science Review}}
  \bibinfo{volume}{57}, \bibinfo{number}{2} (\bibinfo{year}{1963}),
  \bibinfo{pages}{368--377}.
\newblock
\urldef\tempurl%
\url{https://doi.org/10.2307/1952828}
\showDOI{\tempurl}


\bibitem[\protect\citeauthoryear{Thorburn, Polukarov, and Ventre}{Thorburn
  et~al\mbox{.}}{2023}]%
        {thorburn2023error}
\bibfield{author}{\bibinfo{person}{Luke Thorburn}, \bibinfo{person}{Maria
  Polukarov}, {and} \bibinfo{person}{Carmine Ventre}.}
  \bibinfo{year}{2023}\natexlab{}.
\newblock \showarticletitle{Error in the Euclidean Preference Model}. In
  \bibinfo{booktitle}{\emph{Proceedings of the Thirty-Second International
  Joint Conference on Artificial Intelligence}} (Macao, P.R.China)
  \emph{(\bibinfo{series}{IJCAI '23})}. Article \bibinfo{articleno}{322},
  \bibinfo{numpages}{9}~pages.
\newblock
\urldef\tempurl%
\url{https://doi.org/10.24963/ijcai.2023/322}
\showDOI{\tempurl}


\bibitem[\protect\citeauthoryear{Totik}{Totik}{2013}]%
        {totik2013approximation}
\bibfield{author}{\bibinfo{person}{Vilmos Totik}.}
  \bibinfo{year}{2013}\natexlab{}.
\newblock \showarticletitle{Approximation by Homogeneous Polynomials}.
\newblock \bibinfo{journal}{\emph{J. Approx. Theory}}  \bibinfo{volume}{174}
  (\bibinfo{year}{2013}), \bibinfo{pages}{192--205}.
\newblock
\urldef\tempurl%
\url{https://doi.org/10.1016/J.JAT.2013.07.005}
\showDOI{\tempurl}


\bibitem[\protect\citeauthoryear{Ye, Li, and Leiker}{Ye et~al\mbox{.}}{2011}]%
        {ye2011evaluating}
\bibfield{author}{\bibinfo{person}{Min Ye}, \bibinfo{person}{Quan Li}, {and}
  \bibinfo{person}{Kyle~W Leiker}.} \bibinfo{year}{2011}\natexlab{}.
\newblock \showarticletitle{Evaluating Voter--Candidate Proximity in a
  Non-Euclidean Space}.
\newblock \bibinfo{journal}{\emph{Journal of Elections, Public Opinion \&
  Parties}} \bibinfo{volume}{21}, \bibinfo{number}{4} (\bibinfo{year}{2011}),
  \bibinfo{pages}{497--521}.
\newblock
\urldef\tempurl%
\url{https://doi.org/10.1080/17457289.2011.609619}
\showDOI{\tempurl}


\bibitem[\protect\citeauthoryear{Yu and Rodriguez}{Yu and Rodriguez}{2021}]%
        {yu2021spatial}
\bibfield{author}{\bibinfo{person}{Xingchen Yu} {and} \bibinfo{person}{Abel
  Rodriguez}.} \bibinfo{year}{2021}\natexlab{}.
\newblock \showarticletitle{Spatial Voting Models in Circular Spaces: A Case
  Study of the US House of Representatives}.
\newblock \bibinfo{journal}{\emph{The Annals of Applied Statistics}}
  \bibinfo{volume}{15}, \bibinfo{number}{4} (\bibinfo{year}{2021}),
  \bibinfo{pages}{1897--1922}.
\newblock
\urldef\tempurl%
\url{https://doi.org/10.1214/21-AOAS1454}
\showDOI{\tempurl}


\bibitem[\protect\citeauthoryear{Zeitlin and Coupette}{Zeitlin and
  Coupette}{2026}]%
        {zeitlin2026real}
\bibfield{author}{\bibinfo{person}{Joshua Zeitlin} {and}
  \bibinfo{person}{Corinna Coupette}.} \bibinfo{year}{2026}\natexlab{}.
\newblock \showarticletitle{Real Preferences under Arbitrary Norms: Extended
  Abstract}. In \bibinfo{booktitle}{\emph{Proceedings of the 25th International
  Conference on Autonomous Agents and Multiagent Systems, {AAMAS} 2026, Paphos,
  Cyprus, May 25-29, 2026}}. \bibinfo{pages}{3}.
\newblock
\urldef\tempurl%
\url{https://doi.org/10.65109/JWXO7778}
\showDOI{\tempurl}


\end{thebibliography}

\end{document}